\DeclarePairedDelimiter\bra{\langle}{\rvert}
\DeclarePairedDelimiter\ket{\lvert}{\rangle}
\newcommand*{\addFileDependency}[1]{
  \typeout{(#1)}
  \@addtofilelist{#1}
  \IfFileExists{#1}{}{\typeout{No file #1.}}
}
\DeclareMathSymbol{\shortminus}{\mathbin}{AMSa}{"39}
\theoremstyle{plain}
\newtheorem{theorem}{Theorem}[section]
\begin{document}


\title{Creation and annihilation operators for 2D non-abelian anyons}
\author{Nicetu Tibau Vidal}
\affiliation{%
Clarendon Laboratory, Department of Physics, University of Oxford, Oxford OX1 3PU, United Kingdom
}%
\affiliation{%
QICI Quantum Information and Computation Initiative, Department of Computer Science,
The University of Hong Kong, Pok Fu Lam Road, Hong Kong
}%
\author{Lucia Vilchez-Estevez}
\affiliation{%
Clarendon Laboratory, Department of Physics, University of Oxford, Oxford OX1 3PU, United Kingdom
}%




\begin{abstract}
We define creation and annihilation operators for any 2D non-abelian anyon theory by studying the algebraic structure from the anyon diagrammatic formalism. We construct the creation operators for Fibonacci anyons explicitly. We obtain that a single creation operator per particle type is not enough; we need an extra creation operator for every alternative fusion channel. We express any physically allowed observable in terms of these creation and annihilation operators. Finally, we express the 2D Fibonacci Hubbard Hamiltonian in terms of the Fibonacci creation and annihilation operators, and we comment on developing methods for simulation based on these creation and annihilation operators.  

\end{abstract}

\maketitle


\section{Introduction} \label{sec:intro}
Anyons are postulated quasiparticle excitations in two-dimensional systems \cite{Wilczek1982}. They have a topological nature and exotic exchange statistics \cite{Leinaas1977,Wilczek1982,Nayak2008,Haldane,ChernSimons,Goldin1995}, which differentiate them from bosons and fermions. We call them topological particles or phases of matter because the geometry of space-time or the distance between them does not change the result of the relevant operations. These topological properties make anyons systems a promising platform for quantum information processing \cite{Kitaev2003computation,KitaevFreedman,Freedman2001,eisert,bonderson}. Topological quantum computing tries to exploit these features to have a robust computation against error due to local perturbations and noise by the environment. However, the experimental discovery of such systems has remained elusive so far \cite{fibonacci,ising3,ising1,Beenakker2020,Josephson,fqhe}. 

Information processing with topological systems has been one of the main attractions to the study of anyonic theories. We build on the recent information-theoretic perspective on anyons \cite{Rowell,BondersonPhd,Beer,KitaevFormalism,pachos,Xu2022,Shapourian2020}. Nevertheless, anyons can also be very intriguing from a more foundational point of view. The notion of subsystems and locality in quantum information theory is crucial to understanding interactions between different systems. In a qubit theory, for example, we use the tensor product structure to describe systems consisting of multiple subsystems. Two non-abelian anyons can merge (fuse) together to different anyonic charges depending on the fusion channel. Therefore, to completely describe an anyonic quantum system, we need to know all the charges that make up the system and how they fuse with each other. This means there is no such thing as a tensor product between two subsystems since we need that extra bit of information on the overall charge of the composed system. 

There is a gap in the literature when talking about a creation and annihilation operator algebra for non-abelian anyons in 2D. Bosons and fermions have well-defined annihilation operators, so it is natural to look for them in anyon theories too. For anyons in one spatial dimension, the creation and annihilation operators have been found \cite{1dbethe}. We believe that in the 2D case, the difficulty of defining modes (or subsystems) and the topological charge superselection rule are the main reasons for this literature gap. The latter is an interesting characteristic of anyon theories that ensures operators will only be physical observables when the total topological charge is conserved. 

In this work, we define an anyonic mode as simply connected sub-regions with boundaries of our two-dimensional space. We can then map the subsystem structure to the level of simply connected regions with the help of the planar representation of anyons \cite{BondersonPhd}. Using the diagrammatic approach for anyons, we can find the candidates for annihilation operators within the operators left invariant by local transformations on the rest of the system.

\textcolor{blue}{The rest of the paper is organised as follows: In Section \ref{sec:adiag} we review the diagrammatic formalism of non-abelian anyons. In Section \ref{sec:operators}, we define the notion of subsystem in anyonic systems and present our result of the anyonic creation and annihilation operators. In Section \ref{sec:examples}, we obtain the creation operators for Fibonacci anyons. In Section \ref{sec:hamiltonian}, we express the Fibonacci Hubbard Hamiltonian in terms of the creation and annihilation operators. In Section \ref{sec:discussion}, we discuss our results.}

\section{Anyon diagrams}
\label{sec:adiag}

An anyon is a quasiparticle that can exist in two-dimensional systems. We can think of putting two anyons together to create a new particle. This process is known as \emph{fusion}. Two particles $a$ and $b$ can be fused to $c$, which will read as $a\times b = b\times a = c$. However, in non-abelian anyon theories, it is possible to find different outcomes for the same fusion channel; in this case, we write:
\begin{equation}
    a\times b =b \times a = \sum_c N_{ab}^c c,
    \label{eq:fusion}
\end{equation}
where $N_{ab}^c$ are the fusion multiplicities. They indicate the number of different ways in which $a$ and $b$ can fuse to $c$. There is a trivial anyon $e$, the vacuum or the identity. This particle satisfies the property $N_{e a}^b=\delta_{ab}$. Every particle $a$ also has its own antiparticle $\bar{a}$ such that $N_{ab}^e=\delta_{b \bar{a}}$.

We can write an orthonormal complete set of states for $n$ anyons as a fusion tree as in Figure \ref{fig:states}. If any of the $N^{a_{i+1}}_{a_{i-1}a_i}=0$, then the fusion is not allowed, and the diagram is zero. The corresponding bras $\bra{\psi_i}$ are obtained by doing the Hermitian conjugate, which is equivalent to flipping the diagram along a horizontal axis.

\begin{figure}[H]
    \centering
    \includegraphics[width=0.45\textwidth]{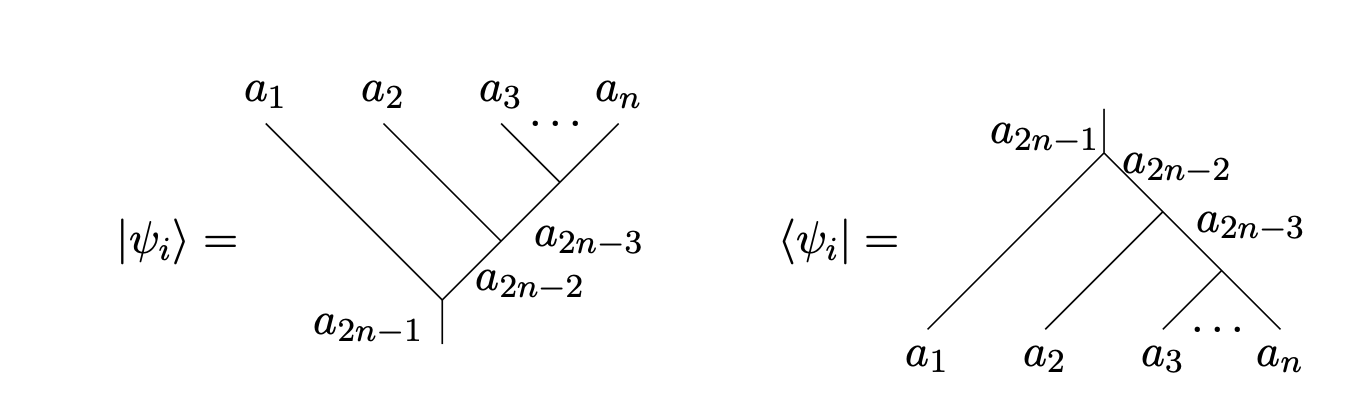}
    \caption{Basis $\ket{\psi_i}$ and its conjugate $\bra{\psi_i}$ of an $n$-anyon system. All vertices are allowed fusion channels.}
    \label{fig:states}
\end{figure}

When utilizing the diagrammatic algebra, we will always set the time direction vertically and upwards and assume that all particles move forward in time. We can interpret a particle going back in time as its antiparticle moving forward in time.

One can use the basis states to build arbitrary operators in the same way we do when using kets and bras. A diagram with lines pointing both upwards and downwards can be interpreted as operators that take as input the particle lines coming in from the bottom and give as output the lines coming out the top. The lines coming in from the bottom are the bra part of the operator, and the lines pointing out are the ket part. For instance, we can write a general operator as in Figure \ref{fig:operator}.

\begin{figure}[H]
    \centering
    \includegraphics[width=0.4\textwidth]{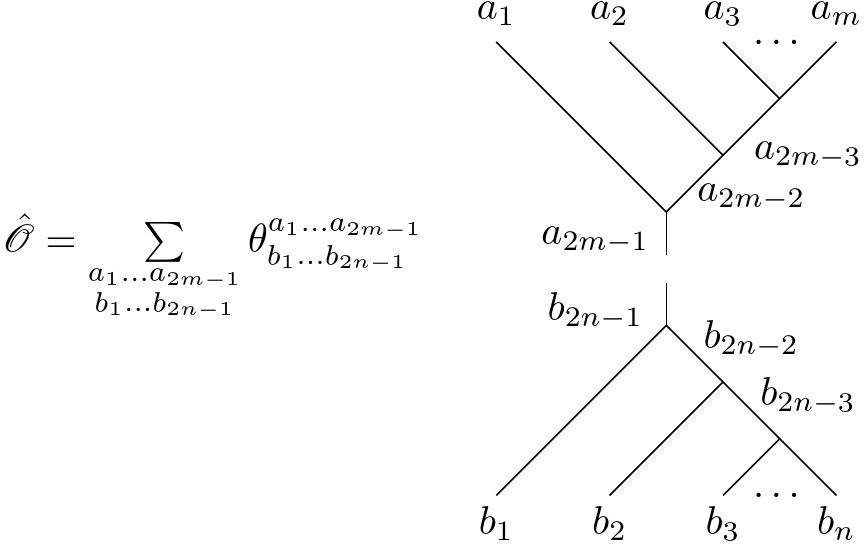}
    \hfil
    \caption{General operator with $n$ inputs and $m$ outputs. The coefficients are arbitrary.}
    \label{fig:operator}
\end{figure}

An important thing to keep in mind is that operators will only be physical observables when the total charge is conserved. In Figure \ref{fig:operator}, this would mean that $a_{2m-1}=b_{2n-1}$ and thus the diagram would be connected. This is a direct consequence of the strong superselection rule that exists in anyonic systems \cite{rehrenSSR}. It is not possible to implement an operator that changes the overall topological charge of the system. 

We will be particularly interested in one family of non-abelian anyons: Fibonacci anyons\cite{fibonacci}. The Fibonacci model is perhaps the simplest non-abelian example and has only two particle types, the vacuum or trivial anyon $e$ and the Fibonacci anyon $\tau$. The only non-trivial fusion rule of this theory reads
\begin{equation}
    \tau \times \tau = e+\tau.
    \label{eq:fibonacci}
\end{equation}

One can convert between bases associated with different fusion trees by using the $F$-matrices shown in Figure \ref{fig:fusion}. In the Fibonacci theory, the only nontrivial $F$-matrix is $[F_{\tau}^{\tau \tau \tau}]=\begin{pmatrix}
\phi^{-1} & \phi^{-1/2}\\
\phi^{-1/2} & -\phi^{-1}
\end{pmatrix}$.

\begin{figure}[H]
    \centering
    \includegraphics[width=0.3\textwidth]{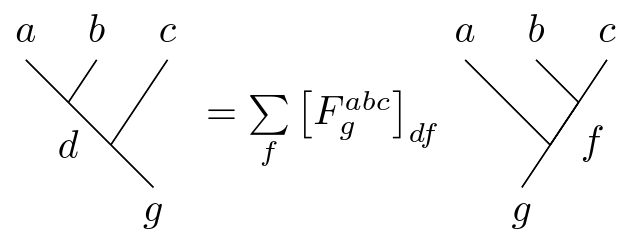}
    \caption{The $F$-matrix defines a change of basis.}
    \label{fig:fusion}
\end{figure}

Further, remember that exchanging two anyons in a multiplicity-free theory results in a phase factor that depends on their overall charge. For Fibonacci anyons, there are two non-trivial exchanging diagrams: (i) when two $\tau$ anyons fuse to the identity and (ii) when two $\tau$ anyons fusing to the $\tau$ anyon. The phases are $R_e^{\tau \tau}=e^{-4\pi i/5}$ and $R_{\tau}^{\tau \tau}=e^{3\pi i/5}$, respectively.

\section{Anyonic annihilation operators}\label{sec:operators}
To define anyonic annihilation operators, we first need a notion of \emph{modes} that can be excited \cite{modes1,bosonmode,Friismodeent}. Usually, these modes refer either to momentum in quantum field theory or to lattice sites in the usual Ising chain models. For simplicity, we prefer to keep the number of modes finite and use the notion of mode as a lattice site. 

We want to identify a simply connected sub-region with boundaries of our 2D space as a single mode where the different anyon types can be excited. We consider that the complete system consists of a finite number $N$ of such regions glued along their boundaries; see Figure \ref{fig:locality}. Therefore, we use a finite 2D lattice populated by the different anyon particle types of the theory.

\begin{figure}[H]
    \centering
    \includegraphics[width=0.25\textwidth]{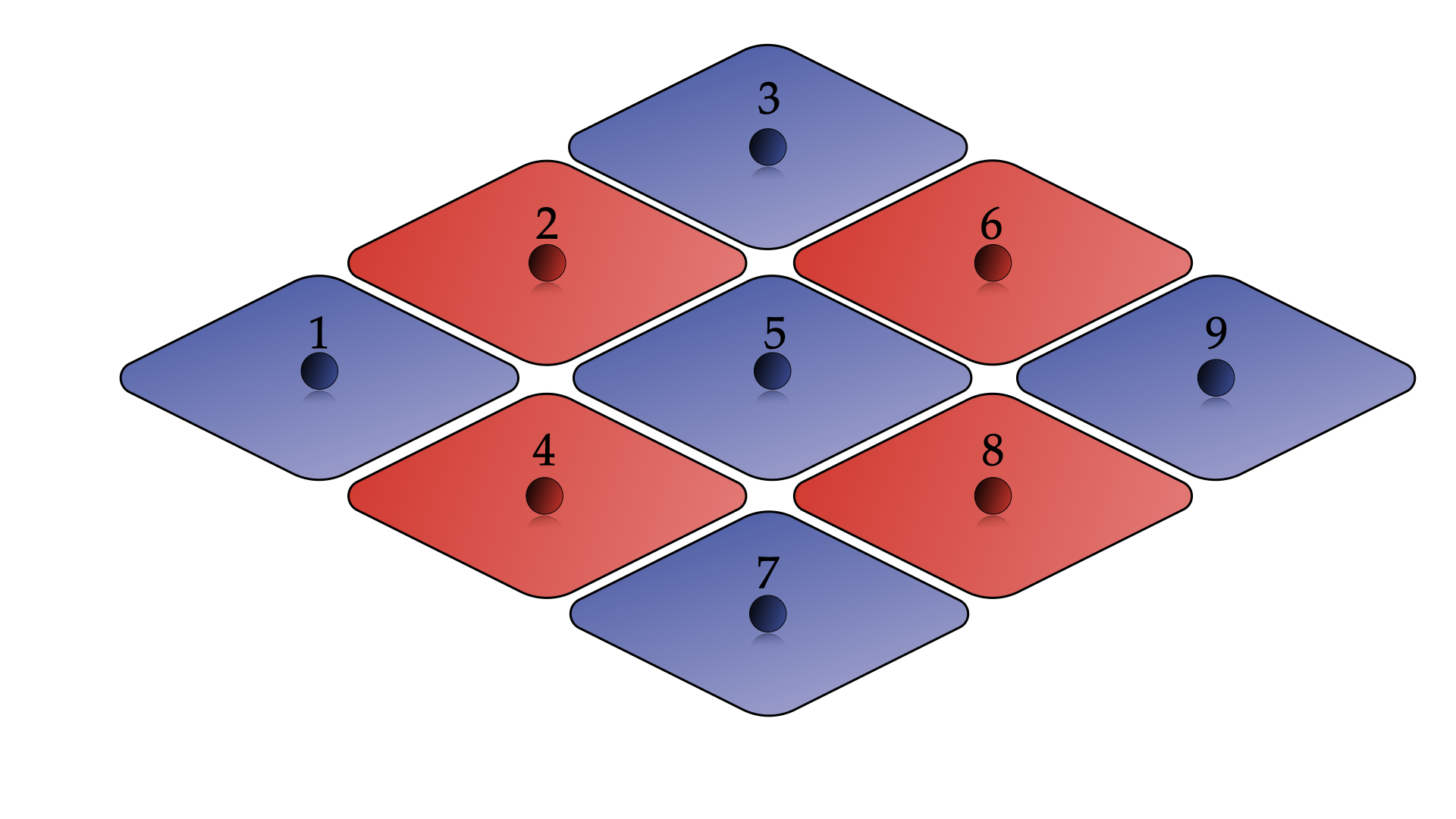}
    \caption{Partition of the plane in different subregions that are associated with modes. Notice that the union of regions $1,2,3,4,6,7,8$ and $9$ is not a simply connected region. We would not consider it a valid subsystem.}
    \label{fig:locality}
\end{figure}

To consider the annihilation operators, we want to identify the modes as the elementary subsystems in the theory. We want to understand how to map the subsystem structure at the level of simply connected regions in the 2D manifold to planar diagrams. Notice that there are different ways to glue the boundaries between the regions to compose them into larger simply connected regions.  By defining a fusion order, these different planar representations correspond to different partitions of the systems given by the planar canonical basis of the anyon theory.   
\begin{figure}[H]
    \centering \includegraphics[width=0.47\textwidth]{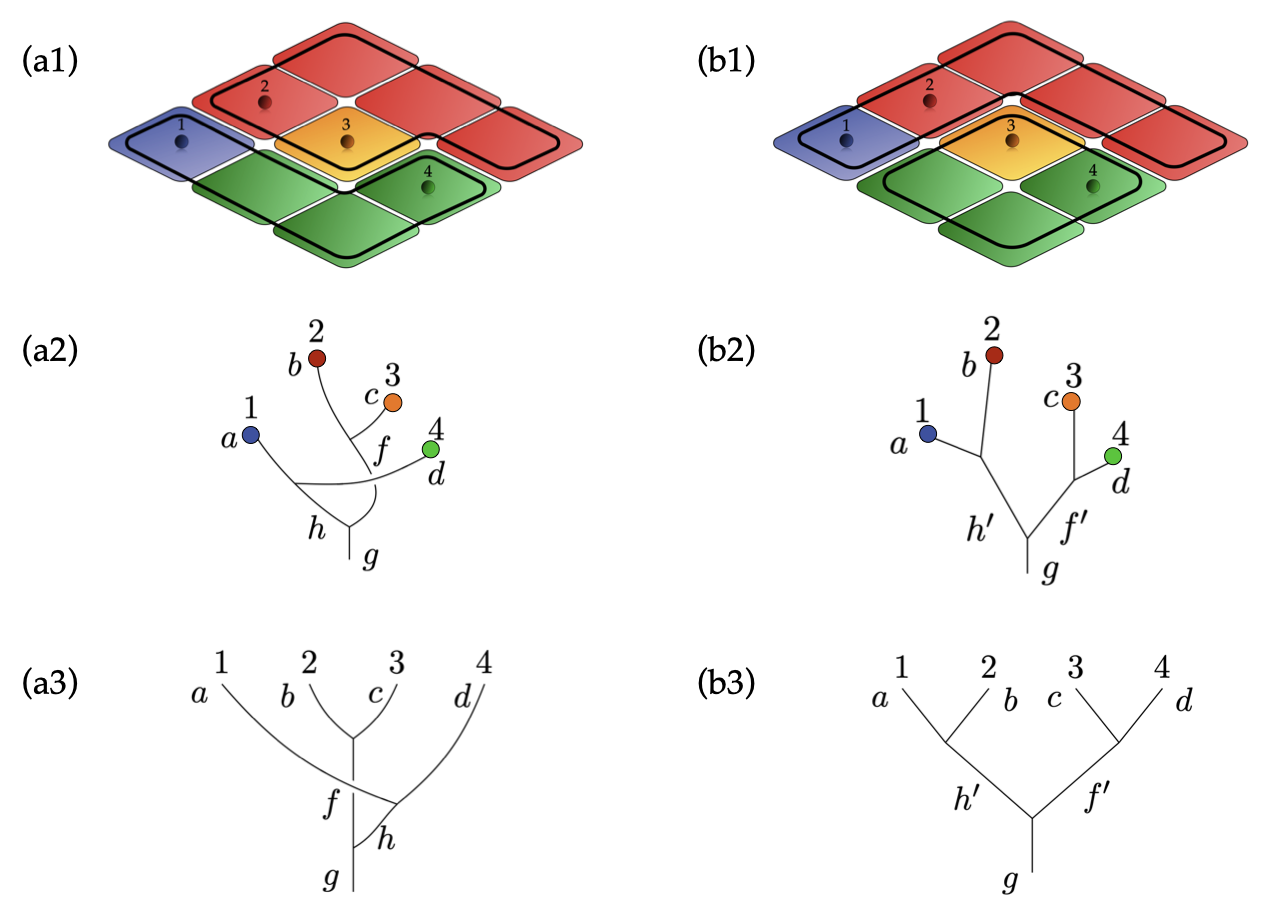}
    \caption{Different planar representations for different compositions of regions. Sub-figure (a1) indicates that we are first fusing anyon $1$ (blue) with anyon $4$ (green) and anyon $2$ (red) with anyon $3$ (orange). In (a2), we express such a system in the diagrammatic form, which is equivalent to the planar representation in sub-figure (a3). In the right column, we have the same, but when we fuse anyon $1$ with anyon $2$ and anyon $3$ with anyon $4$.}
    \label{fig:planarreps}
\end{figure}

As we said, to define annihilation operators, it is helpful to understand each mode as an elemental subsystem. We do this step for two reasons. First, we would like to use the conceptualization of mode subsystems done in the literature of fermionic and bosonic annihilation operators \cite{bosonmode,Friismodeent}. Secondly, if we consider the mode as a subsystem, we can find candidates for annihilation operators within the operators left invariant by transformations local to the system of the rest of the modes. 

Even though there is not a clear notion of a general local operator in anyonic systems, there is the notion of physical local unitaries and observables, as shown in Figure \ref{fig:localunitaries}. The conservation of anyonic charge allows to have well-behaved physical local operators such as unitaries and observables. However, for creation and annihilation operators, we expect the conservation of charge to be violated from what we observe in fermionic systems\cite{Nthesis}. 

If a system consists of modes $M=\{1,\dots,m+1\}$, we can say that a candidate local operator in mode $i\in M$ is an operator $\hat{O}$ such that is invariant under the action of all physical local unitaries in the modes $M\backslash \{i\}$. In equation form that reads as: $\hat{O}$ is a candidate local operator on mode $i\in M$ if and only if
\begin{eqnarray} 
\hat{U}^\dagger_{M\backslash \{i\}} \cdot \hat{O} \cdot \hat{U}_{M\backslash \{i\}}=\hat{O} 
\label{eq:local}
\end{eqnarray}

for all $\hat{U}_{M\backslash\{i\}}$ being an allowed local unitary in modes $M\backslash\{i\}$. This is a natural property that a local operator must satisfy. If one works within the Heisenberg picture of quantum mechanics, it is clear that, indeed, when evolving a local operator in $A$ with a physically allowed local unitary in $B$, then the local operator in $A$ must be left invariant. Moreover, it is not difficult to check that the conditions in Figure \ref{eq:local} give that the collection of all candidate local operators in $i$ form an algebra under the usual sum and operator multiplication, and $\mathbb{C}$ as scalars. So, we can say that we have an abstract definition of the algebra of candidate local operators in mode $i$.  

\begin{figure}[H]
    \centering
    \includegraphics[width=0.4\textwidth]{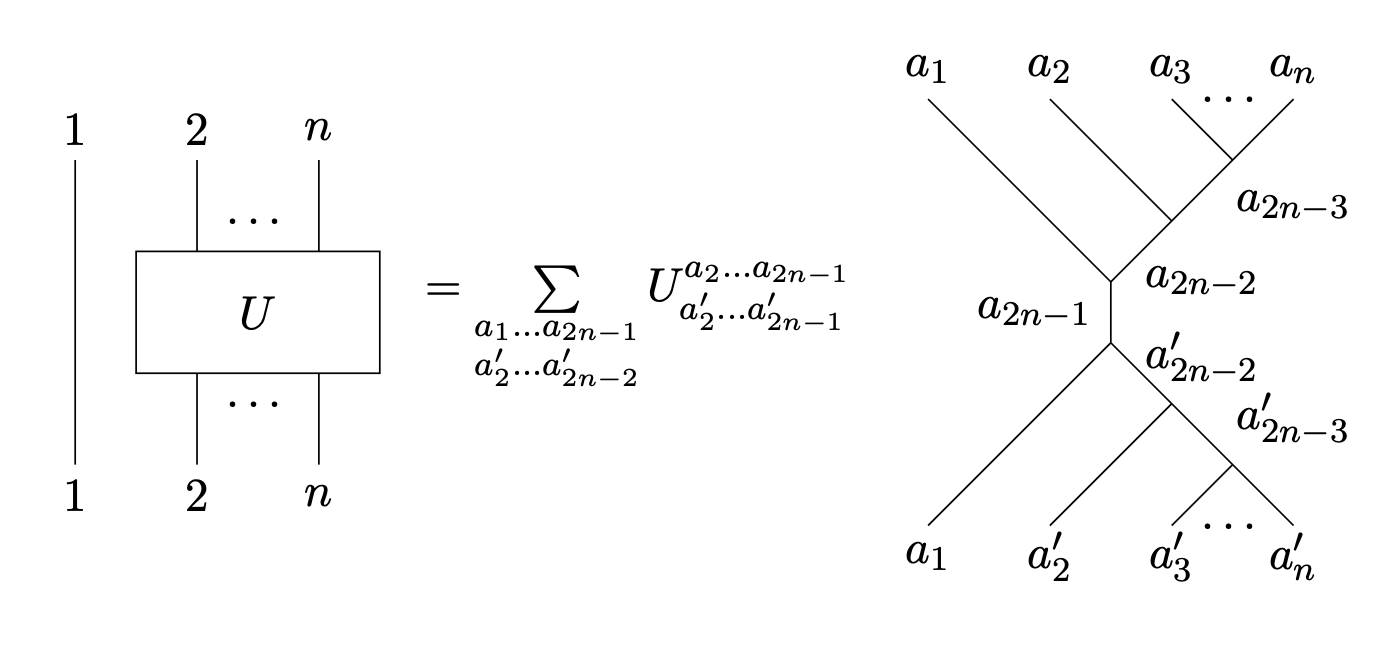}
    \caption{General form of a unitary local in $M\backslash \{1\}$}
    \label{fig:localunitaries}
\end{figure}

Using the diagrammatic approach for anyons, we can characterise the allowed local unitaries and explore the candidate local operators for any given mode. In Figure \ref{fig:localunitaries}, we show how an allowed local unitary looks in diagrammatic form. We solve equation \ref{eq:local} that defines candidate local operators using the diagrammatic formalism, and we find the general form of a candidate local operator on an anyonic mode. For simplicity, we show it here for the first mode. We express the general form of a candidate local operator on mode $1$ in terms of linear combinations of the elements of a canonical basis: 

\begin{eqnarray}
O_1=\sum_{\substack{a,a',b_0 \\d=a\times b_0 , d'=a'\times b_0}} c_{a,a',b_0,d,d'} ~ A^{a a' b_0}_{d d'}
\label{eq:candidate}
\end{eqnarray}
where $c_{a, a',b_0,d,d'} \in \mathbb{C}$ and the canonical basis of the candidate local operator algebra for mode $1$ given by the terms $A^{a a' b_0}_{d d'}$ can be seen in Figure \ref{fig:basis} as planar diagrams.

\begin{figure}[ht]
    \centering
    \includegraphics[width=0.3\textwidth]{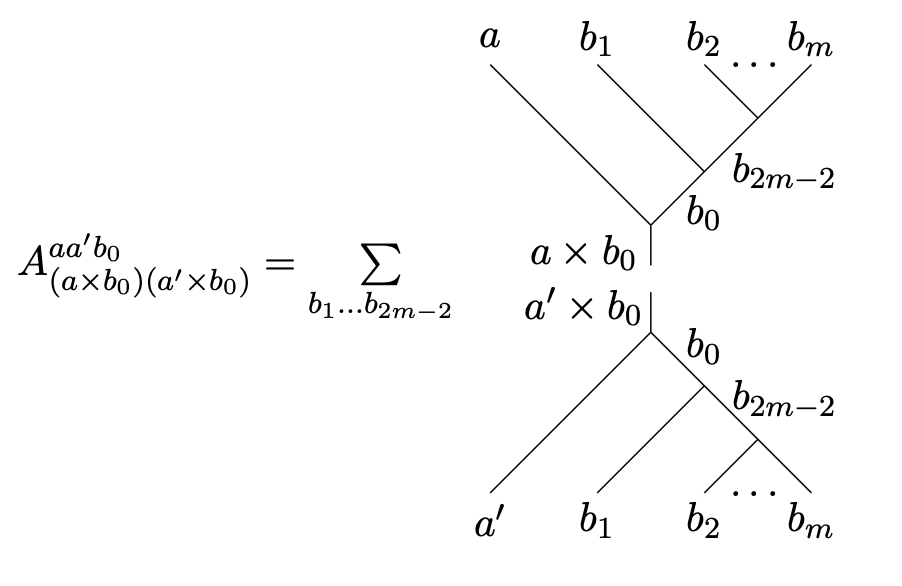}
    \caption{Basis elements of the local operator algebra for the first mode}
    \label{fig:basis}
\end{figure}

Using these basis elements, we want to identify components where the first mode is transformed to the vacuum, as an annihilation operator component would. In anyon diagrams, only one anyon type/charge can be in the same mode. This does not prevent bosonic behaviour, where modes populated with several bosonic particles are expressed as modes being occupied with a higher charge in the anyonic theory. Therefore, the components of the anyonic annihilation operators should consist only of terms that send anyon particle types to the vacuum and not any other particle type. In Figure \ref{fig:alphas}, one can observe that if we fix the particle type $a\neq e$ in mode $1$ bra and the vacuum $e$ in the mode $1$ ket, the basis components then depend only on the global charge of the rest of the system $b_0$ and the term $a\times b_0$, since $e$ is an abelian particle and then $e\times b_0 $ is always $b_0$.

Thus, we realize that the number of annihilation elements that a particle type $a$ has associated in a mode is the number of fusion channels that that particle type has associated with it. This result comes directly from the explicit dependency of having the different annihilating components from $a' \times b_0$, being $b_0$ any particle type.  Thus, all fusion channels of $a'$ will have an associated annihilating element.

For notation, we label each of these annihilation elements of the canonical basis ${a}^{b_0, a\times b_0}_1=A^{e a b_0}_{b_0 a\times b_0}$ (where $1$ expresses the fact they are annihilating on the first mode, $b_0$ and $a\times b_0$ specify the fusion channel and annihilating term, and $a$ is the particle type being annihilated). In all the above and the following expressions, one needs to keep in mind that $a \neq e$.

\begin{figure}[H]
    \centering   \includegraphics[width=0.4\textwidth]{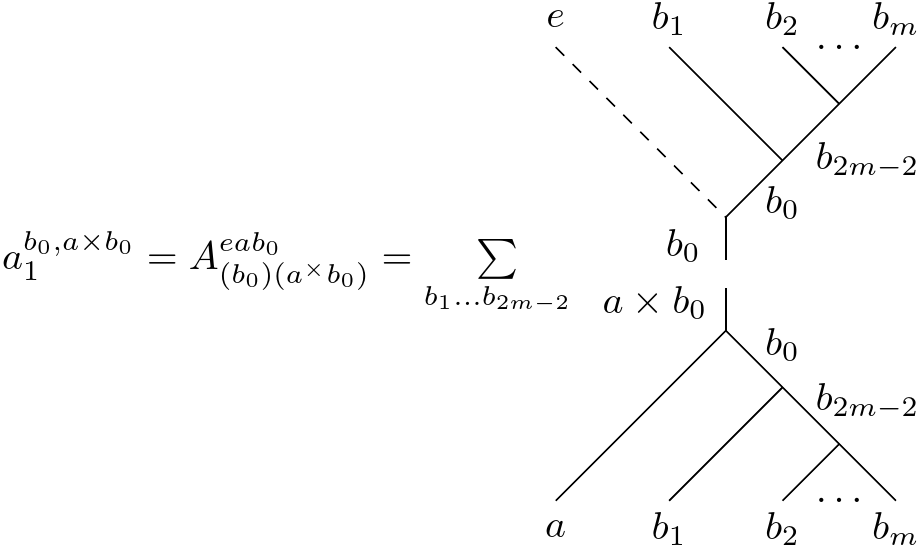}
    \caption{Annihilating elements of the basis of local operators for mode 1. Note that we express the identity anyon with a dashed line.}
    \label{fig:alphas}
\end{figure}

We will refer to the Hermitian conjugate of such annihilating elements as the creating elements. By direct calculation, we find two very exciting results. First is that the annihilating and creating elements of mode $j$ are generators of the candidate local algebra of mode $j$. The second is that the collection of all annihilating and creating elements are generators of the total operator algebra. 

Let us remark on this crucial point. We have seen that the annihilating elements of Figure \ref{fig:alphas}, together with their adjoints, are generators of the candidate local operator algebra. Having obtained these results, we now naturally wonder if the annihilation operators we are looking for are these annihilating elements.

We think they are not. However, we believe that annihilation operators have to be concrete linear combinations of these annihilating elements. In other words, we find that the annihilating elements are components of the annihilation operators, and now we have to decide which is the right way to combine them. 
 
We have these insights by analysing the annihilation operators of spinless fermionic theory in a finite lattice \cite{Friis}. Let us fix the simple setting of having two spinless fermionic modes.

We have a vacuum $\ket{\Omega}$ and two annihilation operators $f_1,f_2$ such that the anticommutation relations hold: 

\begin{eqnarray}
\{f_i,f_j\}=0 \quad  \{f_i,f_j^\dagger\}=\delta_{ij}
\end{eqnarray}

We can represent this theory as an abelian anyon theory with two particle types: a fermion $\psi$ and the vacuum $e$. It is straightforward to see that if we associate each annihilating element with an annihilation operator, we find that instead of a single annihilation operator $f_i$ per mode, we have two annihilation operators per mode: $\psi^{e,\psi}_i$ and $\psi^{\psi,e}_i$ (see Figure \ref{fig:alphas} when replacing $a=\psi$ and summing over the two particle types $e$ and $\psi$). Therefore, this assignment cannot be the correct one. However, we observe that 

\begin{eqnarray}
f_1 = \psi^{e,\psi}_1 + \psi^{\psi,e}_1 \qquad f_2 = \psi^{e,\psi}_2  - \psi^{\psi,e}_2
\label{eq:fermion}
\end{eqnarray}

These relations imply that the fermionic annihilation operators are linear combinations of the annihilation components. In the following lines, we derive which exact linear combinations have to be taken to get the annihilation operators. 

Concretely, we are proposing that the annihilation operators will be operators of the form: 

\begin{eqnarray}
\alpha^{(j)}_k= \sum_{b_0, c_0=a\times b_0} C^{(j)}_{b_0,c_0,k} ~ a_k^{b_0,c_0}
\end{eqnarray}

where $C^{(j)}_{b_0,c_0,k} \in \mathbb{C}$. The term $\alpha$ refers to the fact of being the annihilation operator of the particle type $a$. The label $(j)$ labels the fact that we may need more than one annihilation operator per particle type.

To constraint the coefficients $C^{(j)}_{b_0,c_0,k}$ we consider three conditions that the annihilation operators $\alpha^{(j)}_k$ need to satisfy. The first is that $\{\alpha^{(j)}_{k_1},\dots, \alpha^{(j)}_{k_m} \}_{j,\alpha}$ and their adjoints generate the local algebra of observables in the modes $k_1,\dots,k_m$.

\begin{figure}[H]
    \centering
    \includegraphics[width=0.3\textwidth]{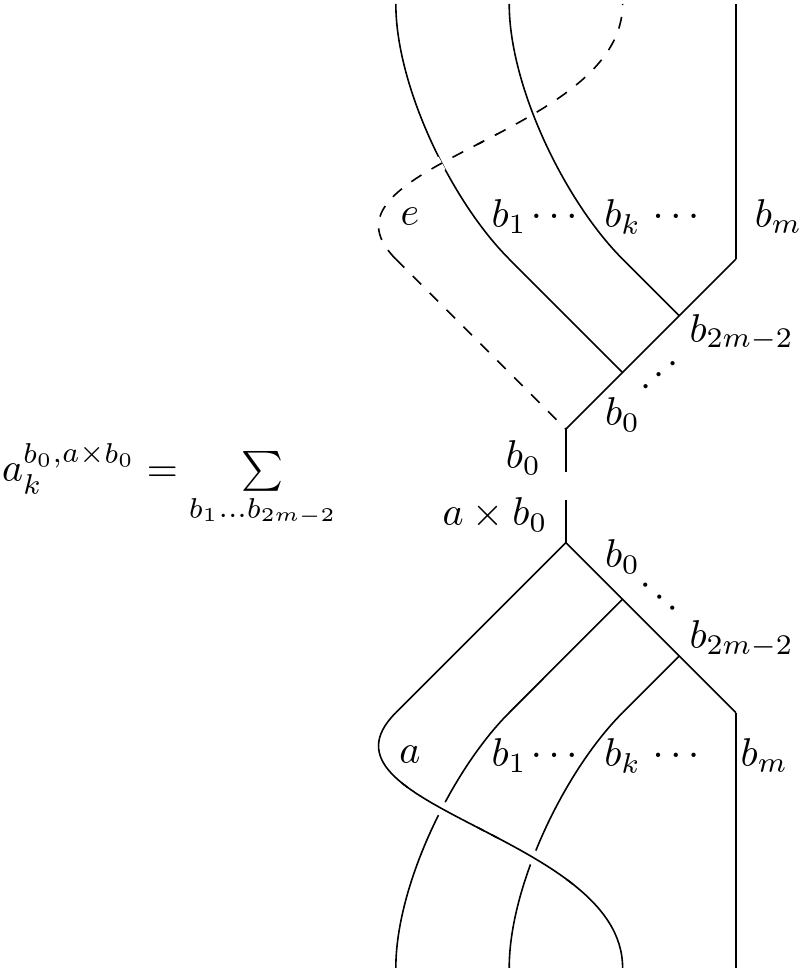}
    \caption{Annihilating elements of the basis of local operators for the $k$th-mode.}
    \label{fig:swap}
\end{figure}

Second, we require that to obtain $\alpha^{(j)}_k$ we only need to know $\alpha^{(j)}_1$ and exchange our way through to $k$. This requirement comes from the intuition that if one wants to annihilate a particle in $k$, it should be equivalent to bringing that particle to $1$, annihilating it there, and then undoing the path we have taken. We show in Figure \ref{fig:swap} that the concrete path we take is the chain of simple exchanges. One could pose different paths giving different annihilation operators, it would be interesting to study the relationship between the definition of subsystems and partial tracing procedures with how this path has to be taken. However, this is further from the scope of this paper. The exchanging condition imposes the following recursive relation to constrain the coefficients $C^{(j)}_{b_0,c_0,k}$

\begin{eqnarray}
\alpha^{(j)}_k= R_{k-1 k} \cdot \alpha^{(j)}_{k-1} \cdot R_{k-1 k}^\dagger
\end{eqnarray}

In the fermionic example that we pose in equation \ref{eq:fermion} we see exactly how the factor $-1$ appears in $f_2$ due to the exchange operation acting on the 'bra' of $\psi^{\psi,e}_1$ non-trivially.

And the third requirement is that for every $b_0, j, k$ there is at least one term $C^{(j)}_{b_0,c_0,k}$ that is non-zero. This is to ensure that the annihilation operators $\alpha^{(j)}_k$ have support on any value of the total charge for the modes other than $k$. This is to prevent explicitly situations where the annihilating terms can be considered annihilation operators and have redundancy.  

We have found a solution to these three constraints. Thus we have found a way to define annihilation operators in anyonic systems. For the solution we propose, the $C_{b_0,c_0,1}^{(j)}\in \mathbb{C}$ we set them to be either $0$ or $1$. However, one could modify our presented solution including different non-zero factors to the terms that are $1$. 

The number of annihilating elements in a mode for the anyon type $a$ is $n_a=\sum_{b c =1}^n N^{c}_{a b}$. Following our general construction, the number of annihilation operators associated with this anyon type $a$ for a given mode will be $J=n_a-n+1$, where $n$ is the total number of particle types in the theory. Notice that with this scheme, we find that for an abelian anyon particle type $a$, there is a single annihilation operator, since for abelian anyon types $n_a=n$ because there are no multiplicities in the fusion channels associated with $a$.

We show how to construct the $J$ annihilation operators for any anyon theory in the Appendix \ref{sec:general}. To make the letter concise, we show here the construction for the simplest non-abelian case, Fibonacci anyons.

We order the Fibonacci particle types as $e,\tau$ of the different allowed fusion channels. We label $c_{b_0,j}$ the $j$'th particle type such that $c_{b_0,j}=\tau \times b_0$. For the first annihilation operator of $\tau$, we set the terms $C^{(0)}_{b_0,c_{b_0,1},1}=1$ and the rest, $C^{(0)}_{b_0,c_{b_0,j},1}$, vanish. This implies that $\alpha^{(0)}_1$ is given by the coefficients being $C^{(0)}_{e,\tau,1}=1$, $C^{(0)}_{\tau,e,1}=1$, and $C^{(0)}_{\tau,\tau,1}=0$.

To define $\alpha^{(1)}_1$, we look at the first $b_0$ with more than one compatible $c_0$. In this case, this is $b_0=\tau$. Now all coefficients remain the same as in $\alpha^{(0)}_1$ except for setting $C^{(1)}_{\tau,c_{b_0,2},1}=1$ and $C^{(1)}_{\tau,c_{b_0,1},1}=0$. Implying that $\alpha^{(1)}_1$ is given by the coefficients being $C^{(1)}_{e,\tau,1}=1$, $C^{(1)}_{\tau,e,1}=0$, and $C^{(1)}_{\tau,\tau,1}=1$.

We would follow the construction to find $\alpha^{(2)}_1$ by applying the same changes but with $c_{\tau,3}$. However, there is no such valid fusion channel. Then we would proceed to the next $b_0$ following the ordering for which $c_{b_0,2}$ exists, and follow the same procedure. In the Fibonacci case, there is no next $b_0$. Thus the construction has been completed. 

We obtain for the Fibonacci case that $\tau$ has $J=2$, annihilation operators. See Figure \ref{fig:fib_op} for a diagrammatic representation of the Fibonacci annihilation operators for a three-anyon Fibonacci space.

Under this general construction that can be found in Appendix \ref{sec:general} and using the simple algebraic identities of the annihilation elements that conform to the annihilation operators, it is straightforward to check that the collection of all annihilation and creation operators for all modes can generate the global algebra of operators and, henceforth, of observables in particular.

Using direct computation, it is also straightforward to check that the annihilation and creation operators for a set of modes generate the local algebra of observables for such a set of modes. The general proof can be found in Appendix \ref{sec:proofobs}. We provide specific examples in the next section expressing Fibonacci observables in terms of the creation and annihilation operators.

\section{Example}\label{sec:examples}
We now exemplify the general results focusing on Fibonacci anyons. Let us start by looking at the generators of the annihilation algebra on three Fibonacci anyons. In Figure \ref{fig:fib_op}, we show the three annihilating elements for a Fibonacci anyon $\tau$ in the left lattice site $1$ and central lattice site $2$. Note that the operators acting on the site $2$, $\tau_2^{b_0, c_0}$, can be obtained from $\tau_1^{b_0, c_0}$ by exchanging the anyons on $1$ and $2$. We express all the operators in the canonical basis by using the F-matrices $F^{abc}_d$ and exchanging factors $R_c^{ab}$ presented at the start of this publication. 

\begin{widetext}

\begin{figure}[H]
    \centering
    \includegraphics[width=0.5\textwidth]{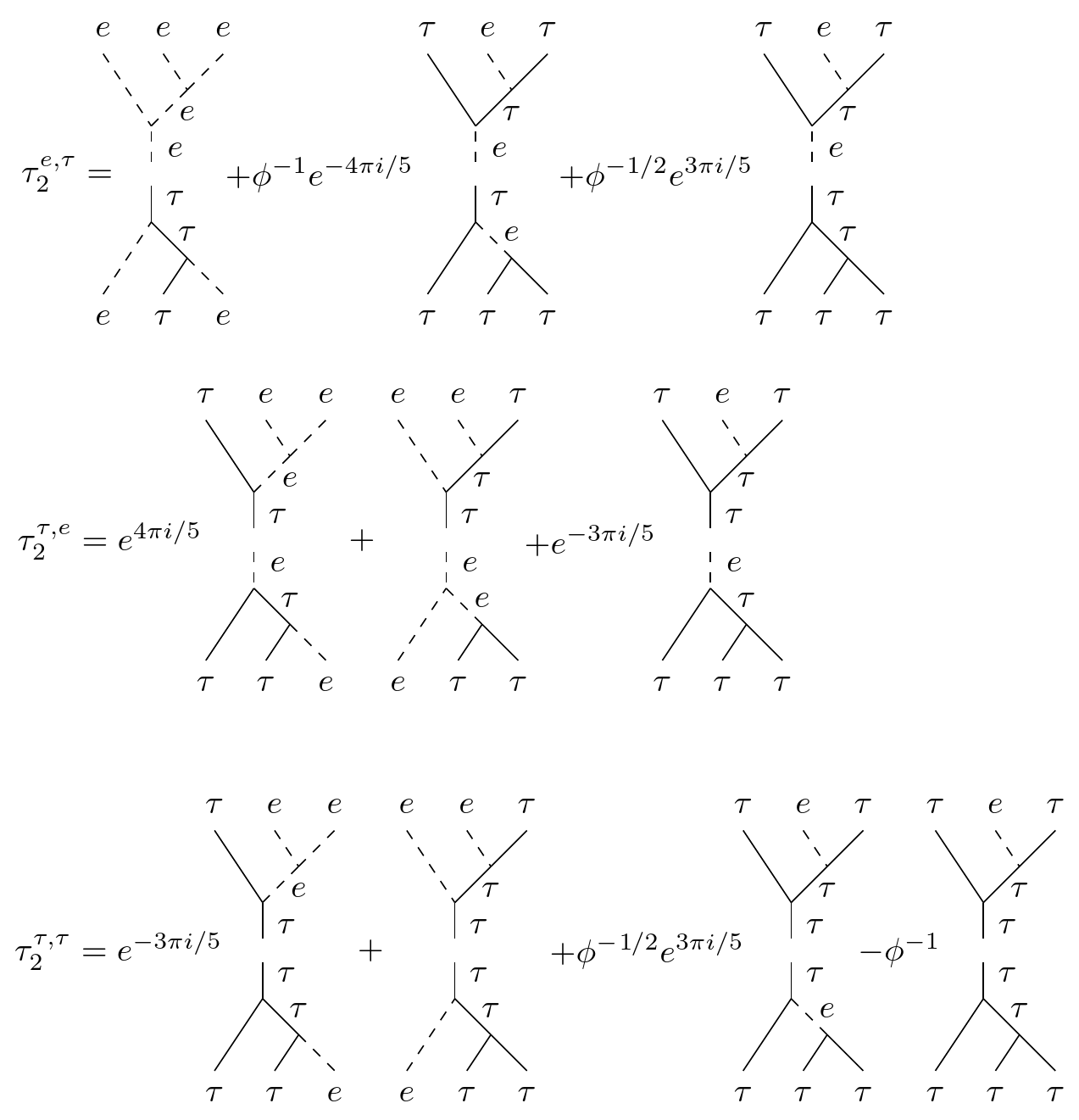}
    \caption{Annihilation elements acting on the first ($\tau^{b_0,c_0}_1$) and second mode ($\tau^{b_0,c_0}_2$) of a three Fibonacci anyon system. The factors are obtained from the change of basis by applying the $F$ and $R$ matrices. $\phi^{-1}$ is the inverse golden ratio, $\phi^{-1}=\left(\sqrt{5}-1\right)/2$.}
    \label{fig:fib_op}
\end{figure}
\end{widetext}

In Fibonacci anyons, we have to define two annihilation operators $\alpha_k^{\left(1\right)}$,$\alpha_k^{\left(0\right)}$ for the Fibonacci $\tau$ particle type. Both operators use the term $\tau_k^{e,\tau}$. To have better algebraic properties, we choose to add a factor of $\frac{1}{\sqrt{2}}$ in front of such terms that will take into account this repetition. We call these two unnormalised annihilation operators: $\alpha_k$ and $\beta_k$, and we will use them throughout the rest of the text.

\begin{align}
    \alpha_k=\frac{1}{\sqrt{2}}\tau_k^{e,\tau}+\tau_k^{\tau, e}, && \beta_k=\frac{1}{\sqrt{2}}\tau_k^{e,\tau}+\tau_k^{\tau, \tau}.
    \label{eq:fib_alphabeta}
\end{align}

In Figure \ref{fig:observables}, we see how some local observables in modes $1$ \& $2$ can be expressed in terms of the local creation and annihilation operators of such modes. A complete list of all observable terms can be found in Appendix \ref{sec:fibobs}.

\begin{figure}[H]
    \centering
    \includegraphics[width=0.20\textwidth]{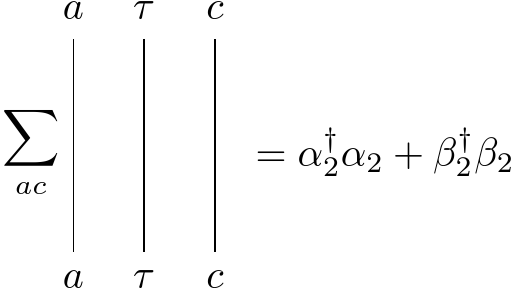}\hfill\includegraphics[width=0.25\textwidth]{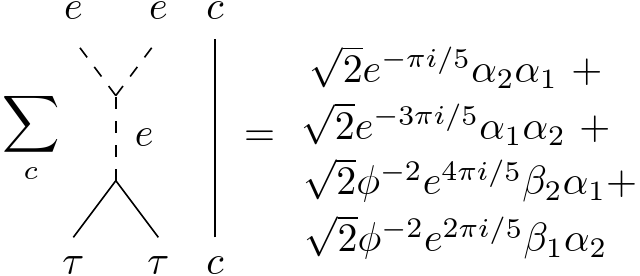}
    \caption{Expression of different Fibonacci observables in terms of anyonic creation and annihilation operators.}
    \label{fig:observables}
\end{figure}

\section{Anyonic Hubbard hamiltonian}\label{sec:hamiltonian}

We want to give use to the annihilation operators that we have defined. A straightforward application is to express Hamiltonians in terms of annihilation operators. Recent work has been studying the properties of Ising-like Fibonacci Hamiltonians \cite{hamiltoniananyon}. 

By expressing Hamiltonians using annihilation operators, we hope to first showcase the similarities and differences between Fibonacci anyons and other particle types such as fermions and bosons; and second, provide tools for the simulation of such Hamiltonian systems, allowing the application of tensor-networks methods \cite{tensornetworks}, explore mapping for applying the Bethe ansatz \cite{Amico} and other methods already used in the 1+1 D case where the notion of annihilation operators is exploited \cite{1dbethe}.

In the Hubbard Hamiltonian described in \cite{hamiltoniananyon} we have a $2 \times N$ square lattice with the ordering shown in Figure \ref{fig:2nordering}. We discuss some consequences and issues that arise from this choice further in the text.

\begin{figure}[H]
    \centering
    \includegraphics[width=0.20\textwidth]{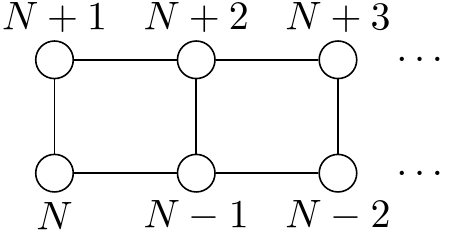}
    \caption{Lattice of model and chosen ordering for a $2\times N$ lattice.}
    \label{fig:2nordering}
\end{figure}

The Hamiltonian has two contributions. First, a hopping contribution between nearest neighbours, where a $\tau$-anyon can jump to the nearest neighbour if it is unoccupied. And a second term, a self-energy term for when there is a $\tau$ in some site. For simplicity and conciseness, we take the same coupling strength for longitudinal and transverse hopping $t_\perp=t_\parallel=t$ \cite{hamiltoniananyon}.    

\begin{figure}[H]
    \centering
    \includegraphics[width=0.45\textwidth]{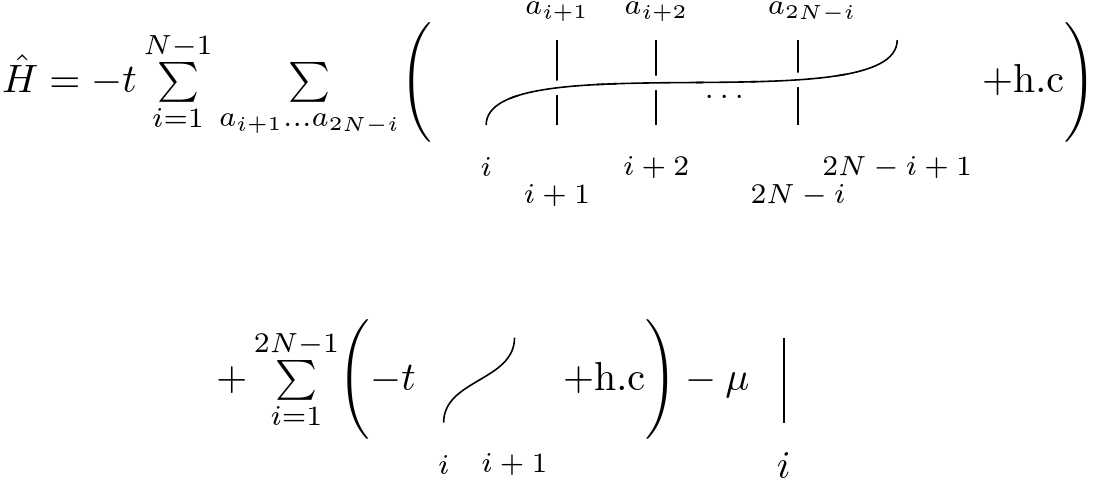}
    \caption{Hubbard Hamiltonian for Fibonacci anyons.}
    \label{fig:hamiltonian}
\end{figure}

The unnormalised annihilation operators $\alpha_k,\beta_k$ allow us to express the Hamiltonian more compactly. It can be expressed without using the unnormalised annihilation operators, but the expression obtained is not as clean and clear as the one obtained using them, which is: 

\begin{gather}
\hat{H}= -t \sum_{i=1}^{N-1}  \left(\alpha_{2N-i+1}^\dagger \alpha_{i} + \beta_{2N-i+1}^\dagger \beta_{i}\right) + \text{h.c}. \nonumber\\ -t \sum_{i=1}^{2N-1}  \left(\alpha_{i+1}^\dagger \alpha_{i} + \beta_{i+1}^\dagger \beta_{i}\right) + \text{h.c}. \nonumber \\ -\mu \sum_{i=1}^{2N} \left(\alpha_i^\dagger \alpha_i + \beta_i^\dagger \beta_i\right)
\label{eq:hamiltonian}
\end{gather} 

We can see how the Hamiltonian has the same terms as in the 2D Fermi-Hubbard model with the same lattice ordering but with two different types of annihilation operators. This expression was not found by directly replacing the fermionic annihilation operators with anyonic annihilation operators. It was found by expressing the Hamiltonian in diagrammatic form in Figure \ref{fig:hamiltonian}, and expressing the diagrammatic observables in terms of the unnormalised anyonic creation and annihilation operators we defined.    

We want to remark that there is nothing in particular of the Hamiltonian in Figure \ref{fig:hamiltonian} which makes it expressable in terms of the creation and annihilation operators. Any physically allowed Hamiltonian can be expressed in terms of the creation and annihilation operators we have defined. It is a matter of convenience to use the unnormalised annihilation operators. These can be described in terms of the original normalised annihilation operators as $\alpha_{j}=\frac{1}{\sqrt{2}}\alpha^{(1)}_{j} {\alpha^{(0)}_{j}}^\dagger \alpha^{(0)}_{j}+\alpha^{(0)}_{j}-\alpha^{(1)}_{j} {\alpha^{(0)}_{j}}^\dagger \alpha^{(0)}_{j}$ and $\beta_{j}=\alpha^{(1)}_{j} {\alpha^{(0)}_{j}}^\dagger \alpha^{(0)}_{j}+ \alpha^{(1)}_{j}-\alpha^{(1)}_{j} {\alpha^{(0)}_{j}}^\dagger \alpha^{(0)}_{j}$.

Nevertheless, there is a subtlety. One needs to pick specific lattices and orderings in order to express the desired notion of locality, as we comment in Figure \ref{fig:planarreps}. That is because we have defined the annihilation operators in the different sites as the annihilation operator in the first site and swapped them in front the other sites. However, we could have defined the annihilation operators at the $k$th site as the annihilation operator in the first site and swapped them in-front the other lattice sites. The resulting two annihilation operators would generate inequivalent spaces because they correspond to two equivalent notions of locality; they are associated with two different subsystem partitions.

In order to express the correct notion of nearest neighbour locality in terms of the annihilation operators we defined (in-front) alone, one needs to pick the ordering such that the connection happens in-front all the modes between the ones in the connection. We want to explore this further in future works and be able to prove the conjecture that for any planar lattice, one can find an ordering such that all the nearest neighbour links can be made to happen either completely behind the in-between modes or completely in front. Thus, making any nearest neighbour Hamiltonian expressable in terms of creation and annihilation operators of the neighbouring terms alone. We have strong indications that such a claim is true. However, for length purposes, we prefer to explain it in a different piece fully.

\section{Discussion}\label{sec:discussion}
One may wonder if the general construction of the proposed annihilation operators applies well to the abelian case. Since for abelian particles, the fusion is deterministic and there is a single possible fusion channel. Applying our method, we recover a single annihilation operator per particle type and lattice site as one would expect.

One can stop and think about how remarkable it is that a single annihilation operator per mode can define bosonic and fermionic systems. Why can a single mathematical object describe the local behaviour of fundamental particles? We have seen that the critical property that allows us to have such a description is their abelian nature.       

For non-abelian particles, we observe that the number of annihilation operators per lattice site is $J=n_a-n+1$, where $n_a$ is the total number of allowed fusion channels associated with that particle type. In the Fibonacci case, for the $\tau$ particle $n_a=3$, we have $\tau\times e=\tau$ and $\tau\times \tau=e+\tau$, and $n=2$ because there are two particle types in Fibonacci anyons, therefore $J=2$.  

We want to notice that the construction is general for any non-abelian anyon theory. We have exemplified it with Fibonacci anyons to be concise. Still, annihilation operators can be defined for Ising anyons \cite{eisert} or any other non-abelian anyon theory one would like to work with. For future work, we would like to explore the connections between the annihilation operators defined using this method for Ising anyons with the annihilation operators one has for Majorana fermions. 

This article presents annihilation operators in the diagrammatic formalism for non-abelian 2D anyons. We want to describe the algebraic properties of the anyonic annihilation and creation operators in commutation-like relations to have a complete algebraic characterization of the anyonic theory and be able to perform manipulations at the annihilation operator level without computing at the diagrammatic level.

We have the suspicion that a complete characterization at the algebraic level might be very challenging. We believe that the algebraic rule for determining whether a combination of creation and annihilation operators is superselection-respecting or not might be rather cumbersome. See Appendix \ref{sec:commrels} for some known algebraic relations of the Fibonacci creation and annihilation operators.


If we refer to the fusion tree where all the components are the identity particle type as $\ket{0}$, we see that $\alpha_k^{(j)} \ket{0}=0$ for all $j$ and $k$. It is straightforward to see that $\ket{0}$ is unique under this property. We can now express any state of the canonical basis as a well-ordered sequence of creation operators acting on $\ket{0}$. Concrete expressions for three-mode Fibonacci anyons can be found in Appendix \ref{sec:fock}. One could try to use these expressions to find suitable Jordan-Wigner mappings for 2+1 D anyons.

Furthermore, exploring Bogoliubov-like transformations for the non-abelian anyonic annihilation operators would be interesting. It would allow one to define a different notion of anyonic mode, not tied to the position latticing of the system we introduced.   


We can now describe general anyonic Hamiltonians in $2\times N$ lattices, by using the in-front-only annihilation operators. The ability to express the Hubbard-like anyonic Hamiltonian in terms of local annihilation operators may have implications in the simulation of the model. Until recently, the community was lacking good numerical techniques to simulate non-Abelian anyons systems. The main difficulty comes from the lack of a tensor product structure and the growth of the Hilbert space with the number of particles. There have been some recent efforts to generalize the tensor network formalism to lattice systems of anyons \cite{Singh2014, Ayeni2016, Pfeifer2015}. However, this work defines the anyonic local operators that constitute the Hamiltonian with their crude representation in the diagrammatic formalism. We expect that having access to the local annihilation operators of an anyonic theory will facilitate the numerical simulation in some cases. In this way, we can exploit the parallelism between the anyonic Hubbard Hamiltonian and its bosonic or fermionic counterpart, for instance.

We note that the Hamiltonian in Equation \ref{eq:hamiltonian} has terms with long-range interactions. These highly long-range terms (with respect to the ordering) can make the simulations time-inefficient. The ordering is deliberately chosen to be the one in Figure \ref{fig:2nordering}, so we just need the in-front-only annihilation operators. Because we have defined them with the exchanging going in one direction, we are restricting ourselves to a not-that-simple expression of the exchanging operator in the other direction. Therefore, we choose the lattice ordering so we do not encounter crossings of anyon lines in this direction. Of course, we can think of a more natural (and short-range) ordering, e.g., ladder ordering, but then our Hamiltonian terms will contain products of several local operators in not only the nearest-neighbor interacting sites. However, this non-locality can be avoided by defining two more sets of creation and annihilation operators analogous to the ones defined in this paper but changing the direction of the exchanging.

In conclusion, if we want to avoid long-range terms (with respect to the ordering) in our Hamiltonian we need to sacrifice the simplicity of the current expression. We think that the study of the similarities and differences between these three approaches is a promising future direction to follow.

We hope that having found expressions for the 2+1 D non-abelian anyon creation and annihilation operators will advance the study and understanding of this topic, especially by allowing us to apply known techniques to the study of topological quantum computing and the experimental detection of such particles described.     

\section*{Acknowledgements}\label{sec:aknows}
We want to thank Steve Simon and Babatunde Ayeni for their comments that improved the first draft. We want to thank Steve Simon again for the course on Topological Quantum Matter at the University of Oxford and the exquisitely well-written course Lecture Notes. Lucia Vilchez-Estevez thanks the Clarendon Foundation for providing financial support during the development of this project. This work was suported by the John Templeton Foundation through the ID\# 62312 grant, as part of the 'The Quantum Information Structure of Spacetime' Project (QISS). The opinions expressed in this publication are those of the authors and do not necessarily reflect the views of the John Templeton Foundation. 

\bibliographystyle{apsrev4-1}
\bibliography{references}

\begin{thebibliography}{37}%
\makeatletter
\providecommand \@ifxundefined [1]{%
 \@ifx{#1\undefined}
}%
\providecommand \@ifnum [1]{%
 \ifnum #1\expandafter \@firstoftwo
 \else \expandafter \@secondoftwo
 \fi
}%
\providecommand \@ifx [1]{%
 \ifx #1\expandafter \@firstoftwo
 \else \expandafter \@secondoftwo
 \fi
}%
\providecommand \natexlab [1]{#1}%
\providecommand \enquote  [1]{``#1''}%
\providecommand \bibnamefont  [1]{#1}%
\providecommand \bibfnamefont [1]{#1}%
\providecommand \citenamefont [1]{#1}%
\providecommand \href@noop [0]{\@secondoftwo}%
\providecommand \href [0]{\begingroup \@sanitize@url \@href}%
\providecommand \@href[1]{\@@startlink{#1}\@@href}%
\providecommand \@@href[1]{\endgroup#1\@@endlink}%
\providecommand \@sanitize@url [0]{\catcode `\\12\catcode `\$12\catcode
  `\&12\catcode `\#12\catcode `\^12\catcode `\_12\catcode `\%12\relax}%
\providecommand \@@startlink[1]{}%
\providecommand \@@endlink[0]{}%
\providecommand \url  [0]{\begingroup\@sanitize@url \@url }%
\providecommand \@url [1]{\endgroup\@href {#1}{\urlprefix }}%
\providecommand \urlprefix  [0]{URL }%
\providecommand \Eprint [0]{\href }%
\providecommand \doibase [0]{http://dx.doi.org/}%
\providecommand \selectlanguage [0]{\@gobble}%
\providecommand \bibinfo  [0]{\@secondoftwo}%
\providecommand \bibfield  [0]{\@secondoftwo}%
\providecommand \translation [1]{[#1]}%
\providecommand \BibitemOpen [0]{}%
\providecommand \bibitemStop [0]{}%
\providecommand \bibitemNoStop [0]{.\EOS\space}%
\providecommand \EOS [0]{\spacefactor3000\relax}%
\providecommand \BibitemShut  [1]{\csname bibitem#1\endcsname}%
\let\auto@bib@innerbib\@empty
\bibitem [{\citenamefont {Wilczek}(1982)}]{Wilczek1982}%
  \BibitemOpen
  \bibfield  {author} {\bibinfo {author} {\bibfnamefont {F.}~\bibnamefont
  {Wilczek}},\ }\href {\doibase 10.1103/PhysRevLett.49.957} {\bibfield
  {journal} {\bibinfo  {journal} {Phys. Rev. Lett.}\ }\textbf {\bibinfo
  {volume} {49}},\ \bibinfo {pages} {957} (\bibinfo {year} {1982})}\BibitemShut
  {NoStop}%
\bibitem [{\citenamefont {Leinaas}\ and\ \citenamefont
  {Myrheim}(1977)}]{Leinaas1977}%
  \BibitemOpen
  \bibfield  {author} {\bibinfo {author} {\bibfnamefont {J.~M.}\ \bibnamefont
  {Leinaas}}\ and\ \bibinfo {author} {\bibfnamefont {J.}~\bibnamefont
  {Myrheim}},\ }\href {\doibase 10.1007/BF02727953/METRICS} {\bibfield
  {journal} {\bibinfo  {journal} {Il Nuovo Cimento B Series 11}\ }\textbf
  {\bibinfo {volume} {37}},\ \bibinfo {pages} {1} (\bibinfo {year}
  {1977})}\BibitemShut {NoStop}%
\bibitem [{\citenamefont {Nayak}\ \emph {et~al.}(2008)\citenamefont {Nayak},
  \citenamefont {Simon}, \citenamefont {Stern}, \citenamefont {Freedman},\ and\
  \citenamefont {Sarma}}]{Nayak2008}%
  \BibitemOpen
  \bibfield  {author} {\bibinfo {author} {\bibfnamefont {C.}~\bibnamefont
  {Nayak}}, \bibinfo {author} {\bibfnamefont {S.~H.}\ \bibnamefont {Simon}},
  \bibinfo {author} {\bibfnamefont {A.}~\bibnamefont {Stern}}, \bibinfo
  {author} {\bibfnamefont {M.}~\bibnamefont {Freedman}}, \ and\ \bibinfo
  {author} {\bibfnamefont {S.~D.}\ \bibnamefont {Sarma}},\ }\href {\doibase
  10.1103/RevModPhys.80.1083} {\bibfield  {journal} {\bibinfo  {journal}
  {Reviews of Modern Physics}\ }\textbf {\bibinfo {volume} {80}},\ \bibinfo
  {pages} {1083} (\bibinfo {year} {2008})}\BibitemShut {NoStop}%
\bibitem [{\citenamefont {Haldane}(1991)}]{Haldane}%
  \BibitemOpen
  \bibfield  {author} {\bibinfo {author} {\bibfnamefont {F.~D.~M.}\
  \bibnamefont {Haldane}},\ }\href {\doibase 10.1103/PhysRevLett.67.937}
  {\bibfield  {journal} {\bibinfo  {journal} {Phys. Rev. Lett.}\ }\textbf
  {\bibinfo {volume} {67}},\ \bibinfo {pages} {937} (\bibinfo {year}
  {1991})}\BibitemShut {NoStop}%
\bibitem [{\citenamefont {Iengo}\ and\ \citenamefont
  {Lechner}(1992)}]{ChernSimons}%
  \BibitemOpen
  \bibfield  {author} {\bibinfo {author} {\bibfnamefont {R.}~\bibnamefont
  {Iengo}}\ and\ \bibinfo {author} {\bibfnamefont {K.}~\bibnamefont
  {Lechner}},\ }\href {\doibase 10.1016/0370-1573(92)90039-3} {\bibfield
  {journal} {\bibinfo  {journal} {Phys. Rept.}\ }\textbf {\bibinfo {volume}
  {213}},\ \bibinfo {pages} {179} (\bibinfo {year} {1992})}\BibitemShut
  {NoStop}%
\bibitem [{\citenamefont {Goldin}\ and\ \citenamefont
  {Sharp}(1995)}]{Goldin1995}%
  \BibitemOpen
  \bibfield  {author} {\bibinfo {author} {\bibfnamefont {G.~A.}\ \bibnamefont
  {Goldin}}\ and\ \bibinfo {author} {\bibfnamefont {D.~H.}\ \bibnamefont
  {Sharp}},\ }\href {\doibase 10.1007/978-1-4899-1060-8_5} {\bibfield
  {journal} {\bibinfo  {journal} {Quantization, Coherent States, and Complex
  Structures}\ ,\ \bibinfo {pages} {43}} (\bibinfo {year} {1995})}\BibitemShut
  {NoStop}%
\bibitem [{\citenamefont {Kitaev}(2003)}]{Kitaev2003computation}%
  \BibitemOpen
  \bibfield  {author} {\bibinfo {author} {\bibfnamefont {A.~Y.}\ \bibnamefont
  {Kitaev}},\ }\href {\doibase 10.1016/S0003-4916(02)00018-0} {\bibfield
  {journal} {\bibinfo  {journal} {Annals of Physics}\ }\textbf {\bibinfo
  {volume} {303}},\ \bibinfo {pages} {2} (\bibinfo {year} {2003})}\BibitemShut
  {NoStop}%
\bibitem [{\citenamefont {Freedman}\ \emph {et~al.}(2003)\citenamefont
  {Freedman}, \citenamefont {Kitaev}, \citenamefont {Larsen},\ and\
  \citenamefont {Wang}}]{KitaevFreedman}%
  \BibitemOpen
  \bibfield  {author} {\bibinfo {author} {\bibfnamefont {M.}~\bibnamefont
  {Freedman}}, \bibinfo {author} {\bibfnamefont {A.}~\bibnamefont {Kitaev}},
  \bibinfo {author} {\bibfnamefont {M.}~\bibnamefont {Larsen}}, \ and\ \bibinfo
  {author} {\bibfnamefont {Z.}~\bibnamefont {Wang}},\ }\href@noop {} {\bibfield
   {journal} {\bibinfo  {journal} {Bull. Amer. Math. Soc.}\ }\textbf {\bibinfo
  {volume} {40}},\ \bibinfo {pages} {31} (\bibinfo {year} {2003})}\BibitemShut
  {NoStop}%
\bibitem [{\citenamefont {Freedman}\ \emph {et~al.}(2001)\citenamefont
  {Freedman}, \citenamefont {Kitaev}, \citenamefont {Larsen},\ and\
  \citenamefont {Wang}}]{Freedman2001}%
  \BibitemOpen
  \bibfield  {author} {\bibinfo {author} {\bibfnamefont {M.~H.}\ \bibnamefont
  {Freedman}}, \bibinfo {author} {\bibfnamefont {A.}~\bibnamefont {Kitaev}},
  \bibinfo {author} {\bibfnamefont {M.~J.}\ \bibnamefont {Larsen}}, \ and\
  \bibinfo {author} {\bibfnamefont {Z.}~\bibnamefont {Wang}},\ }\href {\doibase
  10.48550/arxiv.quant-ph/0101025} {\bibfield  {journal} {\bibinfo  {journal}
  {Lecture Notes in Computer Science (including subseries Lecture Notes in
  Artificial Intelligence and Lecture Notes in Bioinformatics)}\ }\textbf
  {\bibinfo {volume} {1509}},\ \bibinfo {pages} {341} (\bibinfo {year}
  {2001})}\BibitemShut {NoStop}%
\bibitem [{\citenamefont {Campbell}\ \emph {et~al.}(2014)\citenamefont
  {Campbell}, \citenamefont {Hoban},\ and\ \citenamefont {Eisert}}]{eisert}%
  \BibitemOpen
  \bibfield  {author} {\bibinfo {author} {\bibfnamefont {E.~T.}\ \bibnamefont
  {Campbell}}, \bibinfo {author} {\bibfnamefont {M.~J.}\ \bibnamefont {Hoban}},
  \ and\ \bibinfo {author} {\bibfnamefont {J.}~\bibnamefont {Eisert}},\
  }\href@noop {} {\bibfield  {journal} {\bibinfo  {journal} {Quantum Info.
  Comput.}\ }\textbf {\bibinfo {volume} {14}},\ \bibinfo {pages} {981–995}
  (\bibinfo {year} {2014})}\BibitemShut {NoStop}%
\bibitem [{\citenamefont {Bonderson}\ \emph {et~al.}(2008)\citenamefont
  {Bonderson}, \citenamefont {Shtengel},\ and\ \citenamefont
  {Slingerland}}]{bonderson}%
  \BibitemOpen
  \bibfield  {author} {\bibinfo {author} {\bibfnamefont {P.}~\bibnamefont
  {Bonderson}}, \bibinfo {author} {\bibfnamefont {K.}~\bibnamefont {Shtengel}},
  \ and\ \bibinfo {author} {\bibfnamefont {J.~K.}\ \bibnamefont
  {Slingerland}},\ }\href {\doibase 10.1016/j.aop.2008.01.012} {\bibfield
  {journal} {\bibinfo  {journal} {Annals of Physics}\ }\textbf {\bibinfo
  {volume} {323}},\ \bibinfo {pages} {2709} (\bibinfo {year}
  {2008})}\BibitemShut {NoStop}%
\bibitem [{\citenamefont {Vaezi}\ and\ \citenamefont
  {Barkeshli}(2014)}]{fibonacci}%
  \BibitemOpen
  \bibfield  {author} {\bibinfo {author} {\bibfnamefont {A.}~\bibnamefont
  {Vaezi}}\ and\ \bibinfo {author} {\bibfnamefont {M.}~\bibnamefont
  {Barkeshli}},\ }\href {\doibase
  10.1103/PHYSREVLETT.113.236804/FIGURES/1/MEDIUM} {\bibfield  {journal}
  {\bibinfo  {journal} {Phys. Rev. Lett.}\ }\textbf {\bibinfo {volume} {113}},\
  \bibinfo {pages} {236804} (\bibinfo {year} {2014})}\BibitemShut {NoStop}%
\bibitem [{\citenamefont {Lutchyn}\ \emph {et~al.}(2018)\citenamefont
  {Lutchyn}, \citenamefont {Bakkers}, \citenamefont {Kouwenhoven},
  \citenamefont {Krogstrup}, \citenamefont {Marcus},\ and\ \citenamefont
  {Oreg}}]{ising3}%
  \BibitemOpen
  \bibfield  {author} {\bibinfo {author} {\bibfnamefont {R.~M.}\ \bibnamefont
  {Lutchyn}}, \bibinfo {author} {\bibfnamefont {E.~P.}\ \bibnamefont
  {Bakkers}}, \bibinfo {author} {\bibfnamefont {L.~P.}\ \bibnamefont
  {Kouwenhoven}}, \bibinfo {author} {\bibfnamefont {P.}~\bibnamefont
  {Krogstrup}}, \bibinfo {author} {\bibfnamefont {C.~M.}\ \bibnamefont
  {Marcus}}, \ and\ \bibinfo {author} {\bibfnamefont {Y.}~\bibnamefont
  {Oreg}},\ }\href {\doibase 10.1038/s41578-018-0003-1} {\bibfield  {journal}
  {\bibinfo  {journal} {Nature Reviews Materials 2018 3:5}\ }\textbf {\bibinfo
  {volume} {3}},\ \bibinfo {pages} {52} (\bibinfo {year} {2018})}\BibitemShut
  {NoStop}%
\bibitem [{\citenamefont {Moore}\ and\ \citenamefont {Read}(1991)}]{ising1}%
  \BibitemOpen
  \bibfield  {author} {\bibinfo {author} {\bibfnamefont {G.}~\bibnamefont
  {Moore}}\ and\ \bibinfo {author} {\bibfnamefont {N.}~\bibnamefont {Read}},\
  }\href {\doibase 10.1016/0550-3213(91)90407-O} {\bibfield  {journal}
  {\bibinfo  {journal} {Nuclear Physics B}\ }\textbf {\bibinfo {volume}
  {360}},\ \bibinfo {pages} {362} (\bibinfo {year} {1991})}\BibitemShut
  {NoStop}%
\bibitem [{\citenamefont {Beenakker}(2020)}]{Beenakker2020}%
  \BibitemOpen
  \bibfield  {author} {\bibinfo {author} {\bibfnamefont {C.}~\bibnamefont
  {Beenakker}},\ }\href {\doibase 10.21468/scipostphyslectnotes.15} {\bibfield
  {journal} {\bibinfo  {journal} {SciPost Physics Lecture Notes}\ } (\bibinfo
  {year} {2020}),\ 10.21468/scipostphyslectnotes.15}\BibitemShut {NoStop}%
\bibitem [{\citenamefont {Houzet}\ \emph {et~al.}(2013)\citenamefont {Houzet},
  \citenamefont {Meyer}, \citenamefont {Badiane},\ and\ \citenamefont
  {Glazman}}]{Josephson}%
  \BibitemOpen
  \bibfield  {author} {\bibinfo {author} {\bibfnamefont {M.}~\bibnamefont
  {Houzet}}, \bibinfo {author} {\bibfnamefont {J.~S.}\ \bibnamefont {Meyer}},
  \bibinfo {author} {\bibfnamefont {D.~M.}\ \bibnamefont {Badiane}}, \ and\
  \bibinfo {author} {\bibfnamefont {L.~I.}\ \bibnamefont {Glazman}},\ }\href
  {\doibase 10.1103/PhysRevLett.111.046401} {\bibfield  {journal} {\bibinfo
  {journal} {Phys. Rev. Lett.}\ }\textbf {\bibinfo {volume} {111}},\ \bibinfo
  {pages} {046401} (\bibinfo {year} {2013})}\BibitemShut {NoStop}%
\bibitem [{\citenamefont {{Stern}}(2008)}]{fqhe}%
  \BibitemOpen
  \bibfield  {author} {\bibinfo {author} {\bibfnamefont {A.}~\bibnamefont
  {{Stern}}},\ }\href {\doibase 10.1016/j.aop.2007.10.008} {\bibfield
  {journal} {\bibinfo  {journal} {Annals of Physics}\ }\textbf {\bibinfo
  {volume} {323}},\ \bibinfo {pages} {204} (\bibinfo {year}
  {2008})}\BibitemShut {NoStop}%
\bibitem [{\citenamefont {Rowell}\ \emph {et~al.}(2007)\citenamefont {Rowell},
  \citenamefont {Stong},\ and\ \citenamefont {Wang}}]{Rowell}%
  \BibitemOpen
  \bibfield  {author} {\bibinfo {author} {\bibfnamefont {E.}~\bibnamefont
  {Rowell}}, \bibinfo {author} {\bibfnamefont {R.}~\bibnamefont {Stong}}, \
  and\ \bibinfo {author} {\bibfnamefont {Z.}~\bibnamefont {Wang}},\ }\href@noop
  {} {\  (\bibinfo {year} {2007})},\ \Eprint {http://arxiv.org/abs/0712.1377}
  {arXiv:0712.1377 [math.QA]} \BibitemShut {NoStop}%
\bibitem [{\citenamefont {Bonderson}(2007)}]{BondersonPhd}%
  \BibitemOpen
  \bibfield  {author} {\bibinfo {author} {\bibfnamefont {P.~H.}\ \bibnamefont
  {Bonderson}},\ }\href {\doibase doi:10.7907/5NDZ-W890} {\  (\bibinfo {year}
  {2007}),\ doi:10.7907/5NDZ-W890}\BibitemShut {NoStop}%
\bibitem [{\citenamefont {Beer}\ \emph {et~al.}(2018)\citenamefont {Beer} \emph
  {et~al.}}]{Beer}%
  \BibitemOpen
  \bibfield  {author} {\bibinfo {author} {\bibfnamefont {K.}~\bibnamefont
  {Beer}} \emph {et~al.},\ }\href@noop {} {\  (\bibinfo {year} {2018})},\
  \Eprint {http://arxiv.org/abs/1811.06670} {arXiv:1811.06670 [quant-ph]}
  \BibitemShut {NoStop}%
\bibitem [{\citenamefont {Kitaev}(2006)}]{KitaevFormalism}%
  \BibitemOpen
  \bibfield  {author} {\bibinfo {author} {\bibfnamefont {A.}~\bibnamefont
  {Kitaev}},\ }\href {\doibase 10.1016/j.aop.2005.10.005} {\bibfield  {journal}
  {\bibinfo  {journal} {Annals Phys.}\ }\textbf {\bibinfo {volume} {321}},\
  \bibinfo {pages} {2} (\bibinfo {year} {2006})}\BibitemShut {NoStop}%
\bibitem [{\citenamefont {{Brennen}}\ \emph {et~al.}(2009)\citenamefont
  {{Brennen}}, \citenamefont {{Iblisdir}}, \citenamefont {{Pachos}},\ and\
  \citenamefont {{Slingerland}}}]{pachos}%
  \BibitemOpen
  \bibfield  {author} {\bibinfo {author} {\bibfnamefont {G.~K.}\ \bibnamefont
  {{Brennen}}}, \bibinfo {author} {\bibfnamefont {S.}~\bibnamefont
  {{Iblisdir}}}, \bibinfo {author} {\bibfnamefont {J.~K.}\ \bibnamefont
  {{Pachos}}}, \ and\ \bibinfo {author} {\bibfnamefont {J.~K.}\ \bibnamefont
  {{Slingerland}}},\ }\href {\doibase 10.1088/1367-2630/11/10/103023}
  {\bibfield  {journal} {\bibinfo  {journal} {New J. Phys.}\ }\textbf {\bibinfo
  {volume} {11}},\ \bibinfo {eid} {103023} (\bibinfo {year}
  {2009})}\BibitemShut {NoStop}%
\bibitem [{\citenamefont {Xu}\ and\ \citenamefont {Zhou}(2022)}]{Xu2022}%
  \BibitemOpen
  \bibfield  {author} {\bibinfo {author} {\bibfnamefont {C.-Q.}\ \bibnamefont
  {Xu}}\ and\ \bibinfo {author} {\bibfnamefont {D.~L.}\ \bibnamefont {Zhou}},\
  }\href {\doibase 10.1103/PhysRevA.106.012413} {\bibfield  {journal} {\bibinfo
   {journal} {Phys. Rev. A}\ }\textbf {\bibinfo {volume} {106}},\ \bibinfo
  {pages} {012413} (\bibinfo {year} {2022})}\BibitemShut {NoStop}%
\bibitem [{\citenamefont {Shapourian}\ \emph {et~al.}(2020)\citenamefont
  {Shapourian}, \citenamefont {Mong},\ and\ \citenamefont
  {Ryu}}]{Shapourian2020}%
  \BibitemOpen
  \bibfield  {author} {\bibinfo {author} {\bibfnamefont {H.}~\bibnamefont
  {Shapourian}}, \bibinfo {author} {\bibfnamefont {R.~S.~K.}\ \bibnamefont
  {Mong}}, \ and\ \bibinfo {author} {\bibfnamefont {S.}~\bibnamefont {Ryu}},\
  }\href {http://arxiv.org/abs/1708.06697 http://arxiv.org/abs/1708.06677
  http://arxiv.org/abs/2012.02222} {\  (\bibinfo {year} {2020})}\BibitemShut
  {NoStop}%
\bibitem [{\citenamefont {Kundu}(1999)}]{1dbethe}%
  \BibitemOpen
  \bibfield  {author} {\bibinfo {author} {\bibfnamefont {A.}~\bibnamefont
  {Kundu}},\ }\href {\doibase 10.1103/PhysRevLett.83.1275} {\bibfield
  {journal} {\bibinfo  {journal} {Phys. Rev. Lett.}\ }\textbf {\bibinfo
  {volume} {83}},\ \bibinfo {pages} {1275} (\bibinfo {year}
  {1999})}\BibitemShut {NoStop}%
\bibitem [{\citenamefont {{Fredenhagen}}\ \emph {et~al.}(1989)\citenamefont
  {{Fredenhagen}}, \citenamefont {{Rehren}},\ and\ \citenamefont
  {{Schroer}}}]{rehrenSSR}%
  \BibitemOpen
  \bibfield  {author} {\bibinfo {author} {\bibfnamefont {K.}~\bibnamefont
  {{Fredenhagen}}}, \bibinfo {author} {\bibfnamefont {K.~H.}\ \bibnamefont
  {{Rehren}}}, \ and\ \bibinfo {author} {\bibfnamefont {B.}~\bibnamefont
  {{Schroer}}},\ }\href {\doibase 10.1007/BF01217906} {\bibfield  {journal}
  {\bibinfo  {journal} {Communications in Mathematical Physics}\ }\textbf
  {\bibinfo {volume} {125}},\ \bibinfo {pages} {201} (\bibinfo {year}
  {1989})}\BibitemShut {NoStop}%
\bibitem [{\citenamefont {Zanardi}(2002)}]{modes1}%
  \BibitemOpen
  \bibfield  {author} {\bibinfo {author} {\bibfnamefont {P.}~\bibnamefont
  {Zanardi}},\ }\href {\doibase 10.1103/PhysRevA.65.042101} {\bibfield
  {journal} {\bibinfo  {journal} {Phys. Rev. A}\ }\textbf {\bibinfo {volume}
  {65}},\ \bibinfo {pages} {042101} (\bibinfo {year} {2002})}\BibitemShut
  {NoStop}%
\bibitem [{\citenamefont {Dalton}\ \emph {et~al.}(2017)\citenamefont {Dalton},
  \citenamefont {Goold}, \citenamefont {Garraway},\ and\ \citenamefont
  {Reid}}]{bosonmode}%
  \BibitemOpen
  \bibfield  {author} {\bibinfo {author} {\bibfnamefont {B.~J.}\ \bibnamefont
  {Dalton}}, \bibinfo {author} {\bibfnamefont {J.}~\bibnamefont {Goold}},
  \bibinfo {author} {\bibfnamefont {B.~M.}\ \bibnamefont {Garraway}}, \ and\
  \bibinfo {author} {\bibfnamefont {M.~D.}\ \bibnamefont {Reid}},\ }\href
  {\doibase 10.1088/1402-4896/92/2/023004} {\bibfield  {journal} {\bibinfo
  {journal} {Physica Scripta}\ }\textbf {\bibinfo {volume} {92}},\ \bibinfo
  {pages} {023004} (\bibinfo {year} {2017})}\BibitemShut {NoStop}%
\bibitem [{\citenamefont {Friis}\ \emph {et~al.}(2013)\citenamefont {Friis},
  \citenamefont {Lee},\ and\ \citenamefont {Bruschi}}]{Friismodeent}%
  \BibitemOpen
  \bibfield  {author} {\bibinfo {author} {\bibfnamefont {N.}~\bibnamefont
  {Friis}}, \bibinfo {author} {\bibfnamefont {A.~R.}\ \bibnamefont {Lee}}, \
  and\ \bibinfo {author} {\bibfnamefont {D.~E.}\ \bibnamefont {Bruschi}},\
  }\href {\doibase 10.1103/PhysRevA.87.022338} {\bibfield  {journal} {\bibinfo
  {journal} {Phys. Rev. A}\ }\textbf {\bibinfo {volume} {87}},\ \bibinfo
  {pages} {022338} (\bibinfo {year} {2013})}\BibitemShut {NoStop}%
\bibitem [{\citenamefont {Tibau~Vidal}(2023)}]{Nthesis}%
  \BibitemOpen
  \bibfield  {author} {\bibinfo {author} {\bibfnamefont {N.}~\bibnamefont
  {Tibau~Vidal}},\ }\href {\doibase 10.5287/ora-gadxmg4z9} {\  (\bibinfo {year}
  {2023}),\ 10.5287/ora-gadxmg4z9}\BibitemShut {NoStop}%
\bibitem [{\citenamefont {{Friis}}(2016)}]{Friis}%
  \BibitemOpen
  \bibfield  {author} {\bibinfo {author} {\bibfnamefont {N.}~\bibnamefont
  {{Friis}}},\ }\href {\doibase 10.1088/1367-2630/18/3/033014} {\bibfield
  {journal} {\bibinfo  {journal} {New J. Phys.}\ }\textbf {\bibinfo {volume}
  {18}},\ \bibinfo {eid} {033014} (\bibinfo {year} {2016})}\BibitemShut
  {NoStop}%
\bibitem [{\citenamefont {Ayeni}\ \emph {et~al.}(2018)\citenamefont {Ayeni},
  \citenamefont {Pfeifer},\ and\ \citenamefont {Brennen}}]{hamiltoniananyon}%
  \BibitemOpen
  \bibfield  {author} {\bibinfo {author} {\bibfnamefont {B.~M.}\ \bibnamefont
  {Ayeni}}, \bibinfo {author} {\bibfnamefont {R.~N.}\ \bibnamefont {Pfeifer}},
  \ and\ \bibinfo {author} {\bibfnamefont {G.~K.}\ \bibnamefont {Brennen}},\
  }\href {\doibase 10.1103/PHYSREVB.98.045432/FIGURES/9/MEDIUM} {\bibfield
  {journal} {\bibinfo  {journal} {Phys. Rev. B}\ }\textbf {\bibinfo {volume}
  {98}},\ \bibinfo {pages} {045432} (\bibinfo {year} {2018})}\BibitemShut
  {NoStop}%
\bibitem [{\citenamefont {Orus}\ and\ \citenamefont
  {Vidal}(2014)}]{tensornetworks}%
  \BibitemOpen
  \bibfield  {author} {\bibinfo {author} {\bibfnamefont {R.}~\bibnamefont
  {Orus}}\ and\ \bibinfo {author} {\bibfnamefont {G.}~\bibnamefont {Vidal}},\
  }\href {\doibase 10.1016/j.aop.2014.06.013} {\bibfield  {journal} {\bibinfo
  {journal} {Annals of Physics}\ }\textbf {\bibinfo {volume} {349}},\ \bibinfo
  {pages} {117} (\bibinfo {year} {2014})}\BibitemShut {NoStop}%
\bibitem [{\citenamefont {Osterloh}\ \emph {et~al.}(2000)\citenamefont
  {Osterloh}, \citenamefont {Amico},\ and\ \citenamefont {Eckern}}]{Amico}%
  \BibitemOpen
  \bibfield  {author} {\bibinfo {author} {\bibfnamefont {A.}~\bibnamefont
  {Osterloh}}, \bibinfo {author} {\bibfnamefont {L.}~\bibnamefont {Amico}}, \
  and\ \bibinfo {author} {\bibfnamefont {U.}~\bibnamefont {Eckern}},\ }\href
  {\doibase 10.1088/0305-4470/33/9/101} {\bibfield  {journal} {\bibinfo
  {journal} {Journal of Physics A: Mathematical and General}\ }\textbf
  {\bibinfo {volume} {33}},\ \bibinfo {pages} {L87} (\bibinfo {year}
  {2000})}\BibitemShut {NoStop}%
\bibitem [{\citenamefont {Singh}\ \emph {et~al.}(2014)\citenamefont {Singh},
  \citenamefont {Pfeifer}, \citenamefont {Vidal},\ and\ \citenamefont
  {Brennen}}]{Singh2014}%
  \BibitemOpen
  \bibfield  {author} {\bibinfo {author} {\bibfnamefont {S.}~\bibnamefont
  {Singh}}, \bibinfo {author} {\bibfnamefont {R.~N.}\ \bibnamefont {Pfeifer}},
  \bibinfo {author} {\bibfnamefont {G.}~\bibnamefont {Vidal}}, \ and\ \bibinfo
  {author} {\bibfnamefont {G.~K.}\ \bibnamefont {Brennen}},\ }\href {\doibase
  10.1103/PHYSREVB.89.075112/FIGURES/15/MEDIUM} {\bibfield  {journal} {\bibinfo
   {journal} {Phys. Rev. B}\ }\textbf {\bibinfo {volume} {89}},\ \bibinfo
  {pages} {075112} (\bibinfo {year} {2014})}\BibitemShut {NoStop}%
\bibitem [{\citenamefont {Ayeni}\ \emph {et~al.}(2016)\citenamefont {Ayeni},
  \citenamefont {Singh}, \citenamefont {Pfeifer},\ and\ \citenamefont
  {Brennen}}]{Ayeni2016}%
  \BibitemOpen
  \bibfield  {author} {\bibinfo {author} {\bibfnamefont {B.~M.}\ \bibnamefont
  {Ayeni}}, \bibinfo {author} {\bibfnamefont {S.}~\bibnamefont {Singh}},
  \bibinfo {author} {\bibfnamefont {R.~N.}\ \bibnamefont {Pfeifer}}, \ and\
  \bibinfo {author} {\bibfnamefont {G.~K.}\ \bibnamefont {Brennen}},\ }\href
  {\doibase 10.1103/PHYSREVB.93.165128/FIGURES/14/MEDIUM} {\bibfield  {journal}
  {\bibinfo  {journal} {Phys. Rev. B}\ }\textbf {\bibinfo {volume} {93}},\
  \bibinfo {pages} {165128} (\bibinfo {year} {2016})}\BibitemShut {NoStop}%
\bibitem [{\citenamefont {Pfeifer}\ and\ \citenamefont
  {Singh}(2015)}]{Pfeifer2015}%
  \BibitemOpen
  \bibfield  {author} {\bibinfo {author} {\bibfnamefont {R.~N.}\ \bibnamefont
  {Pfeifer}}\ and\ \bibinfo {author} {\bibfnamefont {S.}~\bibnamefont
  {Singh}},\ }\href {\doibase 10.1103/PHYSREVB.92.115135/FIGURES/25/MEDIUM}
  {\bibfield  {journal} {\bibinfo  {journal} {Phys. Rev. B}\ }\textbf {\bibinfo
  {volume} {92}},\ \bibinfo {pages} {115135} (\bibinfo {year}
  {2015})}\BibitemShut {NoStop}%
\end{thebibliography}%
\appendix

\section{General annihilation operators}
\label{sec:general}

We show how to construct the $J$ annihilation operators for any anyon theory. As we expose in the main text, we have identified the annihilating elements for the first mode $a_{1}^{b_0, a\times b_0}$ (see Figure \ref{fig:alphas}). We have also seen that for a general mode $k$, we define the annihilating elements according to the notion of mode locality where we exchange the first mode in-front all the $k-1$ others until $k$ (see Figure \ref{fig:swap}).



Now, to make the annihilation operators, we have seen in the text that as an analogy to the known fermionic annihilation operators, we need to take linear combinations of the annihilating elements. This is without messing with the properties of spanning the local algebra of observables. We want to construct the normalized annihilation operators by specifying the coefficients $C^{(j)}_{b_0,c_0,k} \in \mathbb{C}$ of the linear combinations of the annihilating terms: 

\begin{eqnarray}
\alpha^{(j)}_k= \sum_{b_0, c_0=a\times b_0} C^{(j)}_{b_0,c_0,k} ~ a_k^{b_0,c_0}
\end{eqnarray} 

As we explain in the text, under exchanging in-front the other modes, $C^{(j)}_{b_0,c_0,k}$ is determined if one knows the $C^{(j)}_{b_0,c_0,1}$. Without loss of generality, we are interested in the normalised annihilation operators where $C^{(j)}_{b_0,c_0,1}$ is either $0$ or $1$; the following arguments can be repeated in general because the relevance is on which $C^{(j)}_{b_0,c_0,1}$ need to vanish.   

If we try to have a single annihilation operator, for $\alpha^{(0)}_1$ to be able to generate the whole algebra of observables, it is necessary that no coefficient vanishes. We see here a big difference between abelian and non-abelian particles. For particles that are abelian (there is a single possible value for $c_0=a\times b_0$ for any $b_0$), it is straightforward to see that all coefficients $|C^{(0)}_{b_0,c_0,1}|=1$ then $\alpha^{(0)}_1{\alpha^{(0)}_1}^\dagger$ and ${\alpha^{(0)}_1}^\dagger\alpha^{(0)}_1$ generate the local algebra of observables for mode $1$. 

However, considering that the particle $a$ is not abelian, then there exist two fusion channels compatible for $a\times b_0=c_0+c'_0$ for some $b_0$. Now when taking ${\alpha^{(0)}_1}^\dagger\alpha^{(0)}_1$, terms that violate the superselection rule appear. Specifically, terms that convert total charge $c_0$ to total charge $c'_0$ and vice versa. It is straightforward to observe that one cannot get rid of these terms while keeping the relevant terms necessary to have the local observable that is the projector of the $a$ particle type in mode $1$ by adding more terms in the monomial. The only way to make these undesired terms go away is by either setting $C^{(0)}_{b_0,c_0,1}=0$ or $C^{(0)}_{b_0,c'_0,1}=0$. Let us consider, without loss of generality; we have done the second. We now need at least another annihilation operator with a vanishing $C^{(j)}_{b_0,c_0,1}$ and a non-vanishing $C^{(j)}_{b_0,c'_0,1}$ such that the terms with total charge $c'_0$ that appear in the projector of particle type $a$ in mode $1$ can be generated. 

We have seen that for each "extra" fusion channel, we need at least one annihilation operator that contains such a term. The general construction we provide guarantees such property. However, it might not be optimal. There may be a different grouping of the terms such that the total number of annihilation operators per non-abelian particle is smaller. We know that at least the number of annihilation operators needs to be $J'=\max_k (\sum_{l} N^{a_l}_{a a_k})$. In the construction we present, we obtain $J=\sum_{k,l} N^{a_l}_{a_j a_k}-n+1$ annihilation operators. For the two most relevant non-abelian anyon families, Fibonacci and Ising anyons, we have that $J'=J$; therefore, our construction is optimal for these important cases. 

The construction is as follows. First, we fix an order in the particle types of the anyon theory. We choose to bring all abelian particles at the beginning of the ordering. This fixed order defines a preferred basis for the $n\times n $ matrices for each particle type $a_j$ defined as $(A_j)_{k l}=N^{a_l}_{a_j a_k}$. Now to construct the $J=\sum_{k,l} N^{a_l}_{a_j a_k}-n+1$ annihilation operators for $a_j$ we  label $c_{a_k,i}$ the $i$'th particle type such that $c_{a_l,k}=a_j \times a_k$. For the first annihilation operator of $a_j$, we set the terms $C^{(0)}_{a_k,c_{a_k,1},1}=1$ and the rest, $C^{(0)}_{a_k,c_{a_k,i},1}$ for $i>1$, vanish. This is analogous to selecting the first fusion channel in each row of $A_j$. The choice of the coefficients defines the first annihilation operator. 

If $a_j$ is abelian, our work is over since then $J=1$, and there are no $C^{(0)}_{a_k,c_{a_k,i},1}$ for $i>1$. However, if $a_j$ is non-abelian we need to construct the other annihilation operators. To do so, we go to the first $a_{k_0}$ in the ordering such that exists a $c_{a_{k_0},2}$; so, that has more than one allowed fusion channel when fusing $a_j$ with $a_{k_0}$. Once identified, we set $C^{(1)}_{a_k,c_{a_k,1},1}=1$ for $k\neq k_0$, $C^{(1)}_{a_{k_0},c_{a_{k_0},2},1}=1$, and making all others coefficients vanish. This specification would fix the second annihilation operator.  

To produce a third annihilation operator for $a_j$, we would first check if there exists a $c_{a_{k_0},3}$, if it does we would set $C^{(2)}_{a_k,c_{a_k,1},1}=1$ for $k\neq k_0$, $C^{(2)}_{a_{k_0},c_{a_{k_0},3},1}=1$, and making all others coefficients vanish. Thus setting the third annihilation operator. If $c_{a_{k_0},3}$ does not exist, though, we then go to the next $a_{k_1}$ in the ordering such that $c_{a_{k_1},2}$ exists, and we would set $C^{(2)}_{a_k,c_{a_k,1},1}=1$ for $k\neq k_1$, $C^{(2)}_{a_{k_1},c_{a_{k_1},2},1}=1$, and making all others coefficients vanish. Therefore we would have specified the third annihilation operator. 

To produce the $m$th annihilation operator, one can see the recursive strategy we are following. Given the special $a_{k'}$ we have identified in the $m-1$th annihilation operator for which we have set $C^{(m-1)}_{a_{k},c_{a_{k'},i},1}=1$ for $i>1$, then either $c_{a_{k'},i+1}$ exists or not. If it does we set $C^{(m)}_{a_k,c_{a_k,1},1}=1$ for $k\neq k'$, $C^{(m)}_{a_{k'},c_{a_{k'},i+1},1}=1$, and make all others coefficients zero. However, if it does not exist, we find the next $a_{k''}$ from $a_{k'}$ in the ordering such that $c_{a_{k''},2}$ exists. Then, we specify $C^{(m)}_{a_k,c_{a_k,1},1}=1$ for $k\neq k''$, $C^{(m)}_{a_{k''},c_{a_{k''},2},1}=1$, and make all others coefficients zero. 

The process terminates when $c_{a_{k'},i+1}$ does not exist and there is no $a_{k''}$ further down the ordering than $a_{k'}$ such that $c_{a_{k''},2}$ exists. It is straightforward to check that this will happen for the $J$th term.  

The above procedure fixes the annihilation operators for each particle type $a_j$ in the first lattice site. For the other lattice sites, we exchange the annihilation operators in position, bringing the first lattice site in-front the others into the $k$th lattice site.

We would like just to make a technical remark where the identity particle $e$ can also be considered to have an annihilation operator per mode (being an abelian particle). However, the identity annihilation operator can always be expressed in terms of the other annihilation operators of the theory. Concretely, any ${\left(\alpha_l\right)}^{(j)}_{k} {{\left(\alpha_l\right)}^{(j)}}_{k}^\dagger$ will give such annihilation operator. 

\section{Creation and annihilation operators theorem}
\label{sec:proofobs}
Using the construction of the anyonic annihilation operators specified in Appendix \ref{sec:general}, we can prove the desirable properties of the anyonic annihilation operators. In particular, we want to ensure that with this construction, it is possible to express any local observable in a set of lattice sites in terms of the creation and annihilation operators of such lattice sites.  

Concretely the theorem we prove is the following. 

\begin{theorem}
Consider a general anyon theory with $n$ particle types and $N$ lattice sites. Consider a set of lattice sites $\{s_1,,\dots,s_M\}$ and the subsystem bipartition where the selected sites are always in-front the other $N-M$ sites. Under this bipartition, any local observable in these $M$ sites can be written as a polynomial of the creation and annihilation operators of these lattice sites.   
\label{thm}
\end{theorem}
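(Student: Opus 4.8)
\emph{Proof plan.} The plan is to prove the theorem in three stages: first characterize, diagrammatically, the algebra of local observables on the chosen block $S=\{s_1,\dots,s_M\}$; then show this algebra is generated by the single‑site annihilating and creating elements of the sites in $S$; and finally show those elements are in turn polynomials in the annihilation operators $\alpha^{(j)}_k$ and their adjoints. For the first stage I would generalize equation \ref{eq:candidate} from one site to the block $S$. Because these $M$ sites sit behind the remaining $N-M$ sites, an allowed local unitary on the complement acts only on the total charge $b_0$ carried by those $N-M$ lines, so the invariance condition \ref{eq:local} forces any candidate local operator on $S$ into a canonical form in which the $M$ lines of $S$ carry arbitrary fusion trees on the bra and ket sides — with all internal vertex labels free — joined to the complement only through $b_0$, the bra and ket total charges $d,d'$ each being an allowed fusion of $b_0$ with the top charge of the respective $S$‑tree. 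The physical‑observable condition (total‑charge conservation, as in Figure \ref{fig:operator}) cuts this down to the sub‑collection with $d=d'$; call these the block basis elements. The theorem then amounts to showing every block basis element lies in the algebra generated by the $\alpha^{(j)}_{s_i}$ and their adjoints.

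For the second stage I would show that each block basis element is a polynomial in the single‑site annihilating and creating elements $a^{b_0,c_0}_{s_i}$, ${a^{b_0,c_0}_{s_i}}^\dagger$ with $s_i\in S$. The idea is that a creating element ${a^{b_0,c_0}_{s_i}}^\dagger$ turns the vacuum on site $s_i$ into charge $a$ while recording how it re‑fuses with the running total charge, an annihilating element does the reverse, and composing a creating element on one site with an annihilating element on an adjacent site (using that the vacuum $e$ is abelian, so it passes transparently through the in‑between lines) realizes an elementary "move a charge and re‑fuse" step; a suitable sequence of these steps, together with the $F$‑moves already hidden in the change of basis, builds any prescribed bra‑tree/ket‑tree transformation on the $M$ lines at fixed $b_0$, and the products ${a^{b_0,c_0}_{s_i}}^\dagger a^{b_0,c_0}_{s_i}$ supply the projectors onto the individual $b_0$‑sectors needed to assemble the general block element. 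This is the statement, asserted in the main text, that the annihilating and creating elements of a mode generate its local algebra, extended to a set of modes; the one genuinely new bookkeeping point is checking that braiding the operators behind the in‑between sites (Figure \ref{fig:swap}) keeps everything supported on $S$ together with $b_0$, which is exactly what the "behind" hypothesis of the theorem guarantees.

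For the third stage I would invoke the construction of Appendix \ref{sec:general} to replace the annihilating elements by the $\alpha^{(j)}_k$. One direction is immediate: each $\alpha^{(j)}_k$ is by definition a linear combination of the $a^{b_0,c_0}_k$. For the converse, the products ${\alpha^{(j)}_k}^\dagger\alpha^{(j)}_k$ and $\alpha^{(j)}_k{\alpha^{(j)}_k}^\dagger$ yield projectors onto fixed complement charge and fixed on‑site charge; the third defining requirement — that for every $b_0,j,k$ some coefficient $C^{(j)}_{b_0,c_0,k}$ is non‑zero — guarantees every $b_0$‑sector is reached, and the multiplicity of the operators, one extra $\alpha^{(j)}_k$ for every extra fusion channel as encoded by $J=n_a-n+1$, lets one isolate each total‑charge value $c_0$ and thereby cancel the superselection‑violating cross terms analyzed in Appendix \ref{sec:general}. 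Multiplying these projectors against the $\alpha^{(j)}_k$ extracts each individual $a^{b_0,c_0}_k$ as a polynomial in the $\alpha^{(j)}_k$ and their adjoints; combining with the second stage shows every block basis element, hence every local observable on the $M$ sites, is such a polynomial.

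I expect the main obstacle to be the second stage: the diagrammatic verification that single‑site "charge $\leftrightarrow$ vacuum" moves on the sites of $S$ generate \emph{every} block basis element, since this requires carefully tracking $F$‑matrix factors and vertex multiplicities while rebuilding arbitrary fusion trees, and confirming that the braided definition of the shifted‑site operators does not leak support outside $S\cup\{b_0\}$. The disentangling of the $c_0$ sectors in the third stage is also delicate, but it is controlled by the counting already established, so it should go through by the explicit recursion of Appendix \ref{sec:general}.
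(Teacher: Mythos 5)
Your three-stage plan follows essentially the same architecture as the paper's proof: characterize the candidate local algebra on the block diagrammatically, decompose its generators into single-site annihilating/creating elements (the paper does this with an explicit product formula, Equation \ref{eq:mastereq}, whose factors are $F$-matrix-weighted sums of annihilating elements), and then recover those elements from the $\alpha^{(j)}_k$ using the polynomial identities of Appendix \ref{sec:general}. The one organizational difference is that the paper first proves the result for the block $\{1,\dots,M\}$ and then transports it to a general $\{s_1,\dots,s_M\}$ by conjugating with an explicit braid unitary $U$ and verifying $U^\dagger \alpha^{(j)}_k U = \alpha^{(j)}_{s_k}$, whereas you propose to work directly on $S$ and absorb the braiding into bookkeeping; either route works, but the paper's version makes the ``support does not leak outside $S$'' check concrete.

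There is, however, one genuine overclaim in your third stage. You assert that the projectors built from the $\alpha^{(j)}_k$ let you ``isolate each total-charge value $c_0$ and thereby extract each individual $a^{b_0,c_0}_k$.'' This is not true when the complement charge $b_0$ is abelian: every abelian $b_0$ fuses with $a$ into a unique $c_0$, so by construction all the $\alpha^{(j)}_k$ contain the term $a^{b_0,a\times b_0}_k$ with the same coefficient, and no linear combination or product of the form used in the paper (which relies on some particle type having fusion multiplicity with $b_0$ to build a projector onto that sector) can separate the terms belonging to distinct abelian $b_0$. The algebra generated by the $\alpha^{(j)}_k$ and their adjoints only contains the \emph{sum} $\sum_{b}a^{b,a\times b}_k$ over abelian $b$, not the individual summands. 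The theorem survives because this is all that is needed: in the decomposition of the block generators, the coefficients multiplying the abelian-$b_0$ terms are trivial abelian $F$-matrix entries, so the abelian part of each factor collapses to exactly the expressible sum. For Fibonacci anyons ($e$ is the only abelian type) the distinction is invisible, but for a general theory with several abelian particle types your stage 3 as stated would fail, and your stage 2 must be organized so that only the summed abelian combination is ever required --- which is precisely the role of Equation \ref{eq:mastereq} in the paper's argument.
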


\begin{proof}
To prove this general theorem, we first prove that the statement is true for the sets of lattice sites $\{1,\dots,M\}$, and then we prove that that implies the statement holds for any set of lattice sites.

\begin{figure}[H]
    \centering
    \includegraphics[width=0.85\linewidth]{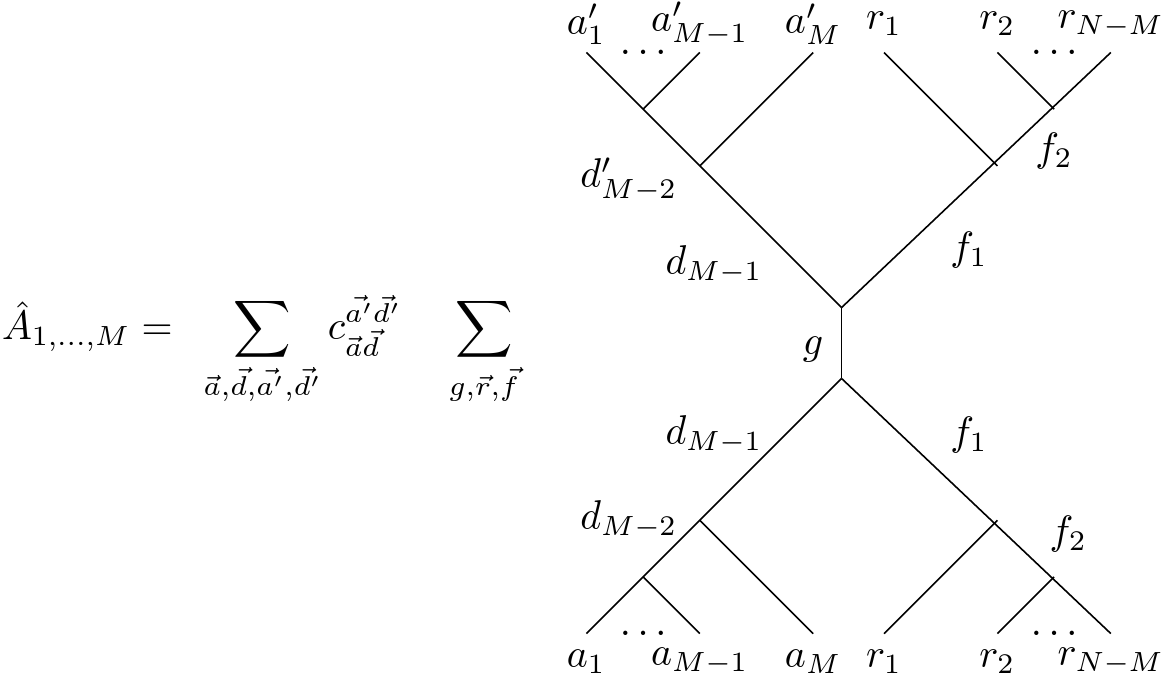}
    \caption{Local observables in $1, \dots, M$. With $c_{\vec{a},\vec{d}}^{\vec{a'},\vec{d'}}={c_{\vec{a'},\vec{d'}}^{\vec{a},\vec{d}}}^*\in \mathbb{C}$.}
    \label{fig:localobs}
\end{figure}

Given the set of lattice sites $\{1,\dots, M\}$, a local observable for the chosen bipartition has the general form shown in Figure \ref{fig:localobs}. Under this bipartition of $1 \dots M | M+1 \dots N$, we can find the elements of the candidate local algebra of operators following the same procedure as in the main paper. We see that any local observable will be an element of the candidate local algebra of operators since it is left invariant by local unitaries acting on the complement of $1,\dots, M$. We define the operators $\hat{O}_{\vec{a},\vec{d},g}$ as in Figure \ref{fig:basisalg}. Notice that any local observable in $1,\dots, M$ can be written as a linear combination of $\hat{O}_{\vec{a},\vec{d},g}^\dagger \hat{O}_{\vec{a'},\vec{a'},g}$.    

\begin{figure}[H]
    \centering
    \includegraphics[width=0.6\linewidth]{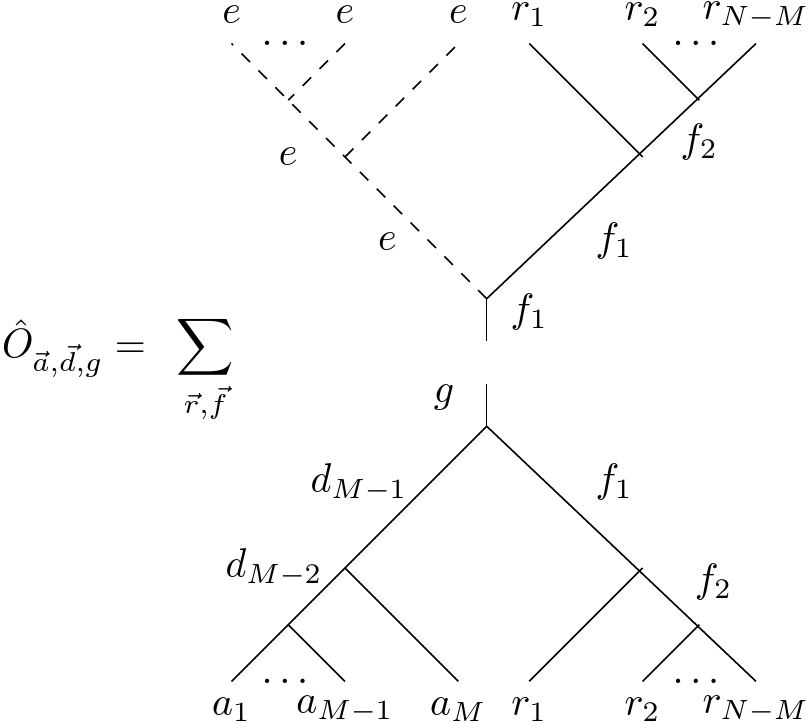}
    \caption{Operators that generate the local observables in $1, \dots, M$}
    \label{fig:basisalg}
\end{figure}

If we can express such operators $\hat{O}_{\vec{a},\vec{d}}$ as polynomials of the local creation and annihilation operators of the lattice sites $1,\dots, M$ then we can express any local observable in these lattice sites as polynomials of the local creation and annihilation operators of the lattice sites $1,\dots, M$.

Consider the annihilation operators ${\left(\alpha_l\right)}^{(j)}_k$ for one of the non-abelian particle types $a_l$. Notice that due to the particle type being non-abelian, there is strictly more than one annihilation operator associated with this particle type. Note that we can compute ${{\left(\alpha_l\right)}^{(j)}}_k-{{\left(\alpha_l\right)}^{(j)}}_k{{\left(\alpha_l\right)}^{(0)}}_k^\dagger {{\left(\alpha_l\right)}^{(0)}}_k= {\left(a_l\right)}_k^{b_j, c_j}$, where $c_j=a_l\times b_j$ such that is not the first in the ordering for the fusion of $a_l$ and $b_j$. With this calculation, we see that we can retrieve from polynomials of the creation and annihilation operators all the annihilating terms that do not appear in ${{\left(\alpha_l\right)}^{(0)}}_k$. Moreover, by now calculating ${{\left(\alpha_l\right)}^{(0)}}_k-{{\left(\alpha_l\right)}^{(j)}}_k+{\left(a_l\right)}_k^{b_j, c_j}={\left(a_l\right)}_k^{b_j, c_0}$ where $c_0=a_l\times b_j$ is the first allowed fusion channel between $a_l$ and $b_j$ under the fixed ordering. We can also calculate ${{\left(\alpha_l\right)}^{(0)}}_k-\sum_{b_j} {\left(a_l\right)}_k^{b_j, c_0}=\sum_{b_r} {\left(a_l\right)}_k^{b_r, a_l\times b_r}$ where the sum over $b_j$ is over the particle types $b_j$ that have more than one allowed fusion channel with $a_l$, and the sum over $b_r$ is over the particle types $b_r$ that have only one allowed fusion channel with $a_l$: $a_l\times b_r$. 

For every $b_r$ that is not an abelian particle, we have that there will exist some particle type $a_{s}$ such that there is more than one allowed fusion channel $c_t=a_s \times b_r$. Thus, the terms ${\left(a_s\right)}_k^{b_r, c_t}$ can be expressed as polynomials of the creation annihilation operators for the particle type $a_s$ as we have shown before. It is easy to see that ${\left(a_s\right)}_k^{b_{r}, c_t} {{\left(a_s\right)}_k^{b_r, c_t}}^\dagger \sum_{b_{r'}} {\left(a_l\right)}_k^{b_{r'}, a_l\times b_{r'}}={\left(a_l\right)}_k^{b_{r}, a_l\times b_r}$. 

After all these calculations, we can conclude that we can express any annihilating term $\left(a_j\right)_k^{b_0,c_0}$ in terms of local creation and annihilation operators on $k$, for $b_0$ being a non-abelian particle type. For the abelian terms, we know we can express $\sum_{b} \left(a_l\right)_k^{b,b\times a_l}$ where the sum runs over $b$ being abelian particle types, in terms of the creation and annihilation operators. 

Once we have these results, we are ready to see how we can express $\hat{O}_{\vec{a},\vec{d},g}$ in terms of $\left(a_j\right)_k^{b_0,c_0}$ and $\sum_{b} \left(a_l\right)_k^{b,b\times a_l}$, for $b_0$ non-abelian and $b$ abelian and $k\leq M$. Thus, we would see that we can express $\hat{O}_{\vec{a},\vec{d},g}$ in terms of the local creation and annihilation operators in the modes $1, \dots, M$. 

We can see with direct computation the following expression:
\begin{widetext}
\begin{eqnarray}
    \hat{O}_{\vec{a},\vec{d},g}=\prod_{j=2}^{M} \left(\sum_{\substack{b_{M-j+2} \\c_{M-j+2}}} \left[F^{d_{M-j} a_{M-j+2} b_{M-j+2}}_{g}\right]^*_{d_{M-j+1} c_{M-j+2}} {(a_{M-j+2})}^{b_{M-j+2},c_{M-j+2}}_{M-j+2}\right) \cdot \sum_{b_1} {(a_{1})}^{b_{1},g}_{1}
    \label{eq:mastereq}
\end{eqnarray}
\end{widetext}

where $d_0=a_1$. We can see that we can express each term of the product in terms of the local creation and annihilation operators. Let us start with the term $\sum_{b_1} {(a_{1})_1}^{b_{1},g}$ we can decompose the sum between the sum over abelian particles plus the sum over non-abelian particles. Each term of the non-abelian sum can be expressed in terms of the creation and annihilation operators in mode $1$, so the sum of such terms is also. Moreover, we have seen how the sum over the abelian terms is expressible in terms of the creation and annihilation operators. 

Similarly, for the terms in the product involving the $F$-matrices components, we do the same decomposition of the sum. For the non-abelian particles $b_{M-j+2}$, each term ${(a_{M-j+2})}^{b_{M-j+2},c_{M-j+2}}_{M-j+2}$ is expressable in terms of the creation and annihilation operators, thus the linear combinations of such terms can be expressed in terms of such local operators. For abelian particles, it should be noted that if $b_{M-j+2}$ is abelian, $\displaystyle[F^{d_{M-j} a_{M-j+2} b_{M-j+2}}_{g}]^*_{d_{M-j+1} c_{M-j+2}}$ equals $\delta_{c_{M-j+2}, a_{M-j+2}\times b_{M-j+2}}$. This follows from the fact that the F-moves for abelian particles are trivial. Therefore the sum over the abelian particles ends up becoming $\displaystyle\sum_{b_{M-j+2}} {(a_{M-j+2})}^{b_{M-j+2},a_{M-j+2}\times b_{M-j+2}}_{M-j+2}$. Thus, expressable in terms of the local creation and annihilation operators of the mode $M-j+2$.

This concludes that $\hat{O}_{\vec{a},\vec{d},g}$ can be expressed in terms of the creation and annihilation operators. Therefore, any local observable in $1,\dots, M$ can be expressed using the $1, \dots, M$ creation and annihilation operators. Concretely, as a polynomial of such operators. Moreover, note that our proof is constructive and that by using it, one could find a closed expression of any observable in terms of our local creation and annihilation operators.

All we have left now to prove the general theorem is to use the fact that we know the theorem holds for the set of modes $1,\dots, M$ to extend it to any set of modes $s_1,\dots, s_M$ ($s_i<s_{i+1}$). Notice that $i\leq s_i$ always. We will see that we can find a good map between the local observables in a general set of modes $s_1,\dots, s_M$ and the local observables $1,\dots, M$. 

Remember that the notion of locality is such that the relevant modes go in-front the ancillary modes. It is straightforward to observe that applying the unitary transformation $U=\prod_{i=0}^{M\shortminus1}\prod_{j=M\shortminus i+1}^{s_{M\shortminus i}} R^\dagger_{j\shortminus 1 ~ j}$ we transform any local observable in the modes $s_1,\dots, s_M$ to a local observable in $1\dots M$.  
\begin{eqnarray}
    U \hat{A}_{s_1,\dots,s_M} U^\dagger=\hat{A}_{1,\dots ,M}
\end{eqnarray}

Remember that we proved that $\hat{A}_{1,\dots ,M}=p({\left(\alpha_l\right)}^{(j)}_{k},{\left(\alpha_l\right)^{(j')}}_{k'}^\dagger)$, where $p(\cdot)$ is a polynomial, and $k,k'\in \{1,\dots,M\}$. Hence, by linearity and unitarity, $\hat{A}_{s_1,\dots,s_M}=U^\dagger \hat{A}_{1,\dots ,M} U$ equals $p(U^\dagger {\left(\alpha_l\right)}^{(j)}_{k} U,U^\dagger {\left(\alpha_l\right)^{(j')}}_{k'}^\dagger U)$. Notice that $U^\dagger=\prod_{i=1}^M \prod_{j=i}^{s_i\shortminus 1} R_{s_i\shortminus 1\shortminus j+i ~ s_i\shortminus j+i}$,  that the annihilation operators in $k$ are left invariant by unitaries local in the set of modes that excludes $k$, and that ${\left(\alpha_l\right)^{(j)}}_{k}= V {\left(\alpha_l\right)}^{(j)}_{1} V^\dagger$ by definition, where $V=\prod_{j=0}^{k\shortminus 2} R_{k\shortminus 1\shortminus j ~ k\shortminus j}$. Now, it is easy to see that 
\begin{widetext}
\begin{eqnarray}
    U^\dagger {\left(\alpha_l\right)}^{(j)}_{k} U = \prod_{i=1}^k \prod_{j=i}^{s_i\shortminus 1} R_{s_i\shortminus 1\shortminus j+i ~ s_i\shortminus j+i} \prod_{i=k+1}^{M} \prod_{j=i}^{s_i\shortminus 1} R_{s_i\shortminus 1\shortminus j+i ~ s_i\shortminus j+i} \cdot ~\left({\left(\alpha_l\right)}^{(j)}_{k}\right)\cdot \nonumber \\ \cdot \left(\prod_{i=k+1}^{M} \prod_{j=i}^{s_i\shortminus 1} R_{s_i\shortminus 1\shortminus j+i ~ s_i\shortminus j+i}\right)^\dagger \left(\prod_{i=1}^k \prod_{j=i}^{s_i\shortminus 1} R_{s_i\shortminus 1\shortminus j+i ~ s_i\shortminus j+i}\right)^\dagger,
\end{eqnarray}

Moreover, the unitary $\prod_{i=k+1}^{M} \prod_{j=i}^{s_i\shortminus 1} R_{s_i\shortminus 1\shortminus j+i ~ s_i\shortminus j+i}$ is local on the set of modes that do not include $k$, so it leaves any creation operator invariant. Thus, 

\begin{eqnarray} 
U^\dagger {\left(\alpha_l\right)}^{(j)}_{k} U=\prod_{i=1}^{k\shortminus 1} \prod_{j=i}^{s_i\shortminus 1} R_{s_i\shortminus 1\shortminus j+i ~ s_i\shortminus j+i}  \left(\prod_{j=k}^{s_k\shortminus 1} R_{s_k\shortminus 1\shortminus j+k ~ s_k\shortminus j+k}\right) \cdot ~\left({\left(\alpha_l\right)}^{(j)}_{k}\right) ~ \cdot \nonumber \\ \cdot  \left(\prod_{j=k}^{s_k\shortminus 1} R_{s_k\shortminus 1\shortminus j+k ~ s_k\shortminus j+k}\right)^\dagger \left(\prod_{i=1}^{k\shortminus 1} \prod_{j=i}^{s_i\shortminus 1} R_{s_i\shortminus 1\shortminus j+i ~ s_i\shortminus j+i}\right)^\dagger
\end{eqnarray}

By the definition of ${\left(\alpha_l\right)}^{(j)}_{k}$, we see that applying the unitary action of $\left(\prod_{j=k}^{s_k\shortminus 1} R_{s_k\shortminus 1\shortminus j+k ~ s_k\shortminus j+k}\right)$ to it it gives us  ${\left(\alpha_l\right)}^{(j)}_{s_k}$, giving:

\begin{eqnarray} 
U^\dagger {\left(\alpha_l\right)}^{(j)}_{k} U=\prod_{i=1}^{k\shortminus 1} \prod_{j=i}^{s_i\shortminus 1} R_{s_i\shortminus 1\shortminus j+i ~ s_i\shortminus j+i} \left({\left(\alpha_l\right)}^{(j)}_{s_k}\right)  \left(\prod_{i=1}^{k\shortminus 1} \prod_{j=i}^{s_i\shortminus 1} R_{s_i\shortminus 1\shortminus j+i ~ s_i\shortminus j+i}\right)^\dagger
\end{eqnarray}
\end{widetext}

The only thing that is just left to check is that $\left(\prod_{i=1}^{k\shortminus 1} \prod_{j=i}^{s_i\shortminus 1} R_{s_i\shortminus 1\shortminus j+i ~ s_i\shortminus j+i}\right)$ is local on the set of modes that does not include $s_k$. We observe that, indeed, the largest mode that appears in the expression is $s_{k\shortminus 1}$, which we know is strictly smaller than $s_k$. Therefore,  all modes that appear in the expression are strictly smaller than $s_k$, making the unitary local on the set of modes that excludes $s_k$, thus the unitary action leaves $\left({\left(\alpha_l\right)}^{(j)}_{s_k}\right)$ invariant, giving $U^\dagger {\left(\alpha_l\right)}^{(j)}_{k} U={\left(\alpha_l\right)}^{(j)}_{s_k}$. This also proves it for the creation operators $U^\dagger {{\left(\alpha_l\right)}^{(j')}}^\dagger_{k'} U={{\left(\alpha_l\right)}^{(j')}}_{s_k'}^\dagger$. Therefore, we indeed see that any local observable $\hat{A}_{s_1,\dots,s_M}$ in any set of modes $s_1, \dots, s_M$ can be written as a polynomial of the local creation and annihilation operators for such set of modes, since: $\hat{A}_{s_1,\dots,s_M}=\hat{A}_{s_1,\dots,s_M}=U^\dagger \hat{A}_{1,\dots ,M} U= p(U^\dagger {\left(\alpha_l\right)}^{(j)}_{k} U, U^\dagger {\left(\alpha_l\right)^{(j')}}_{k'}^\dagger U)=p({\left(\alpha_l\right)}^{(j)}_{s_k}, {\left(\alpha_l\right)^{(j')}}_{s_k'}^\dagger)$. This concludes the proof of the theorem. 
\end{proof}

\section{3-anyon Fibonacci observables}
\label{sec:fibobs}

We present a complete list of all observable terms local in the $1,2$ modes of a three-mode Fibonacci anyons model. Up to hermitian conjugation, there are nine linearly independent terms. In Figures \ref{fig:locale} \& \ref{fig:localtau}, we show all of them.

Notice that the expressions we show in Figures \ref{fig:locale} \& \ref{fig:localtau} are more compressed than the expressions obtained through the reasoning in the proof of the general theorem. We present these expressions because we think it is more convenient to work with them, especially when investigating Hamiltonians.  

\begin{figure}[H]
\centering
         \hfil\includegraphics[width=0.19\textwidth]{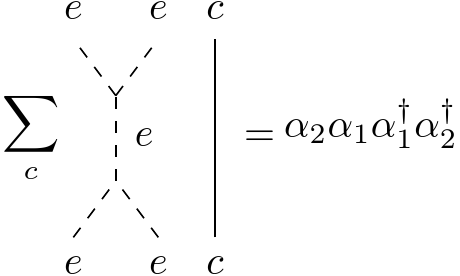} \hspace{15pt}
         \includegraphics[width=0.24\textwidth]{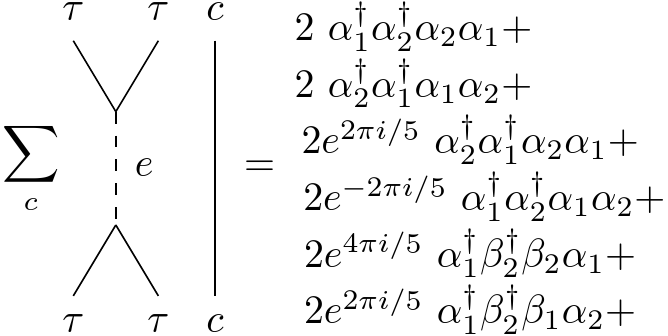}\hfil
     \newline
     \newline
     \hfil \includegraphics[width=0.27\textwidth]{diagrams/observables2new.png}
     \hfil 
    \caption{Local observable terms in $1,2$ with global charge $e$, up to hermitian conjugation}
    \label{fig:locale}
\end{figure}

\begin{figure}[ht] 
        \centering\includegraphics[width=0.27\textwidth]{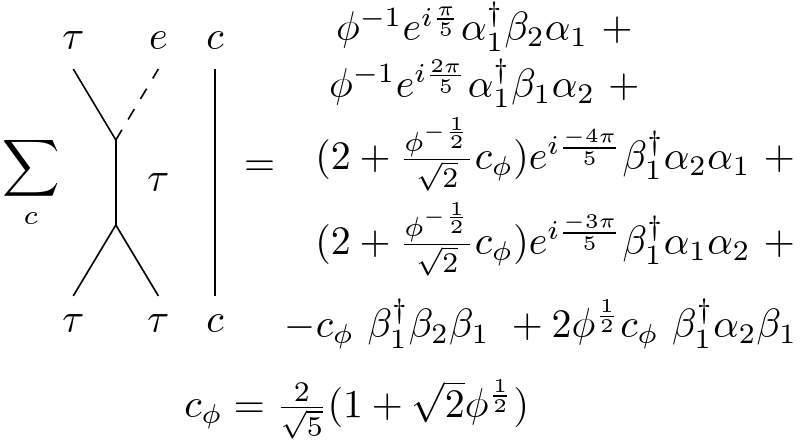}\hfill\includegraphics[width=0.17\textwidth]{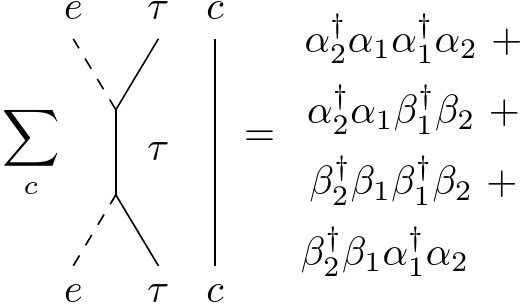}
         \newline
         \newline\includegraphics[width=0.27\textwidth]{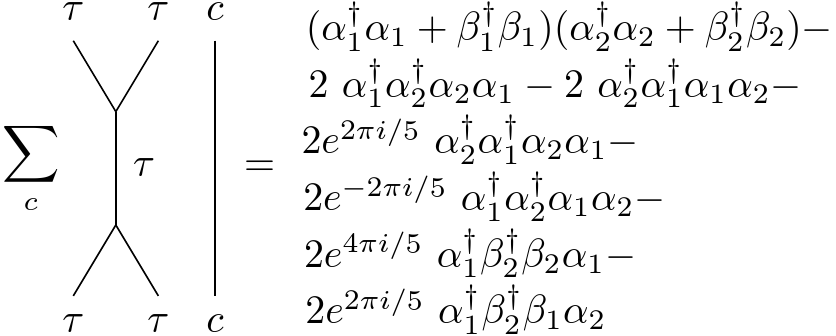}\hfill\includegraphics[width=0.17\textwidth]{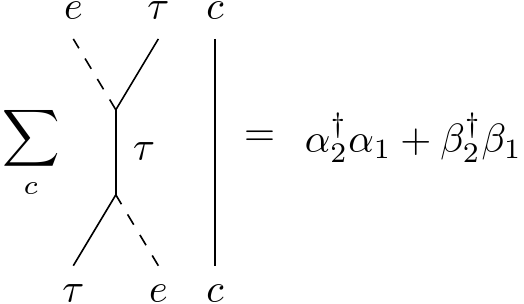}
         \newline 
         \newline\includegraphics[width=0.27\textwidth]{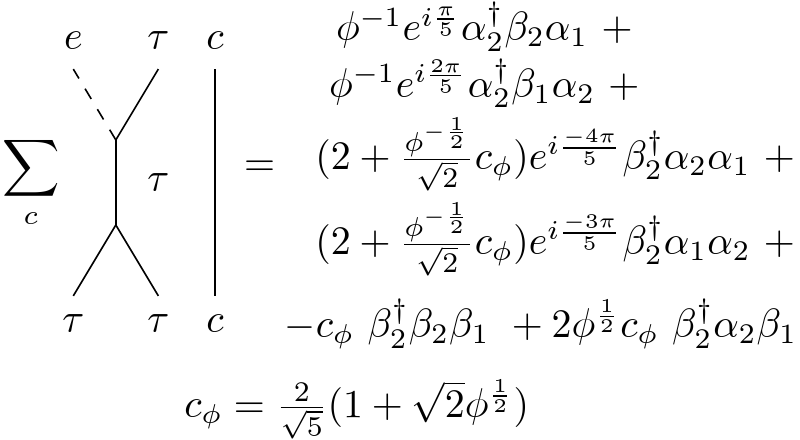} \hfill\includegraphics[width=0.17\textwidth]{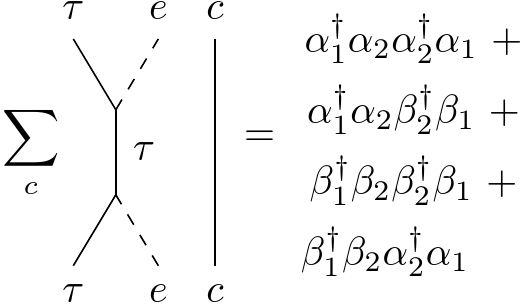}
    \caption{Local observable terms in $1,2$ with global charge $\tau$, up to hermitian conjugation}
    \label{fig:localtau}
\end{figure}

\section{Fibonacci commutation relations}
\label{sec:commrels}

We have defined the Fibonacci creation and annihilation operators in terms of the diagrammatic formalism we have for non-abelian anyons. A future avenue for research is to give a completely algebraic characterisation of Fibonacci anyons. To do so, we need to specify the algebraic relations the Fibonacci creation and annihilation operators follow and which operators one can specify as polynomials of the creation and annihilation operators are observables.

Such a goal is quite ambitious and difficult, being out of the scope of this publication. Nevertheless, we feel that presenting some algebraic relations satisfied by the Fibonacci creation and annihilation operators may be helpful to provide initial insight into such a task and help in becoming familiar with manipulating expressions where the creation and annihilation operators are present.

We can find the following relations for the operators of a single mode: 
\begin{gather}
    \left(\alpha_S\right)^2 =0 \quad \alpha_S \beta_S=\beta_S\alpha_S=0 \quad \alpha_S\alpha_S^\dagger=\beta_S\beta_S^\dagger  \\ \alpha_S\beta_S^\dagger \beta_S= \alpha_S\beta_S^\dagger \alpha_S=\beta_S\alpha_S^\dagger \alpha_S=\beta_S\alpha_S^\dagger \beta_S \\ \alpha_S\alpha_S^\dagger \alpha_S=\alpha_S-\beta_S\alpha_S^\dagger\alpha_S \\ \beta_S\beta_S^\dagger \beta_S=\beta_S-\alpha_S\beta_S^\dagger\beta_S \\ \beta_S^\dagger \beta_S+\alpha_S^\dagger \alpha_S  + \alpha_S\alpha_S^\dagger + \alpha_S\beta_S^\dagger \alpha_S \beta_S^\dagger=\mathbb{I}
\end{gather}

Thanks to the relations above, we can see that any single-mode annihilation and creation operator polynomial reduces to a fourth-degree polynomial at most. The algebraic relations between creation and annihilation operators at different lattice sites are much more difficult to express in simple algebraic equations. It is exciting to see that the annihilation operators do not satisfy equations of the form $\alpha_A \alpha_B = q \alpha_B \alpha_A$ where $q$ is a complex number. In fact, $\alpha_A \alpha_B$ and $\alpha_B \alpha_A$ have disjoint support.

We hope further work is done on studying the algebraic characterization of non-abelian anyons.

\section{Fibonacci Fock states}
\label{sec:fock}

We can now express any state of the canonical basis as a well-ordered sequence of creation operators acting on $\ket{0}$. We present the concrete expressions for three-mode Fibonacci anyons. The expressions for the canonical basis are in Figure \ref{fig:fock}. 

\begin{widetext}
    
\begin{figure}[H]
    \centering
         \includegraphics[width=0.36\textwidth]{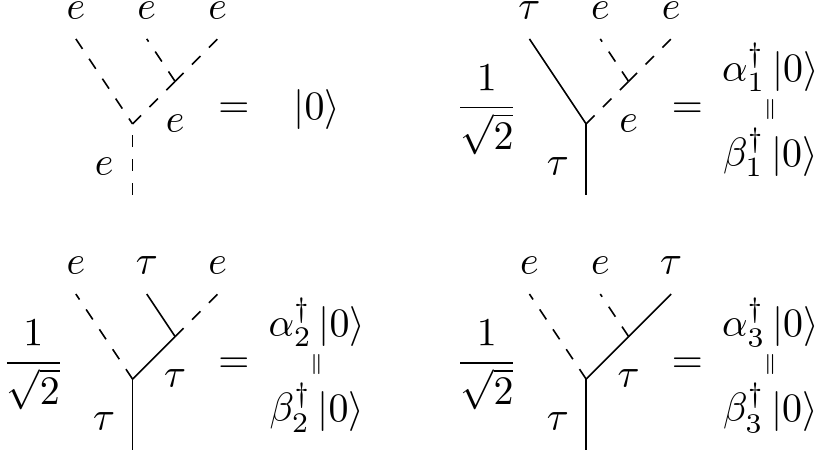}
     \hfill
     \newline
     \newline
     \hfill
         \includegraphics[width=0.5\textwidth]{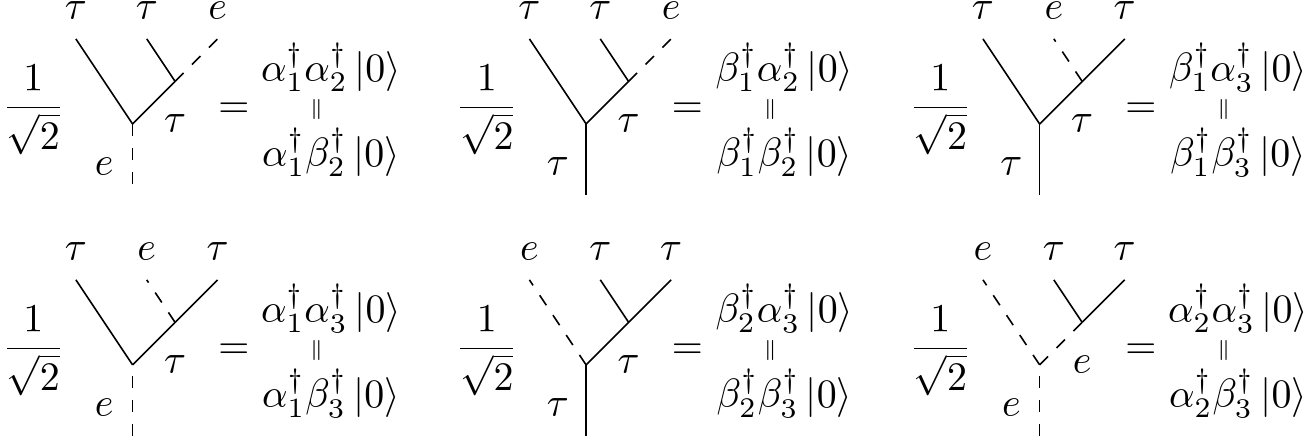}
     \newline
     \newline
     \hfill
         \includegraphics[width=0.33\textwidth]{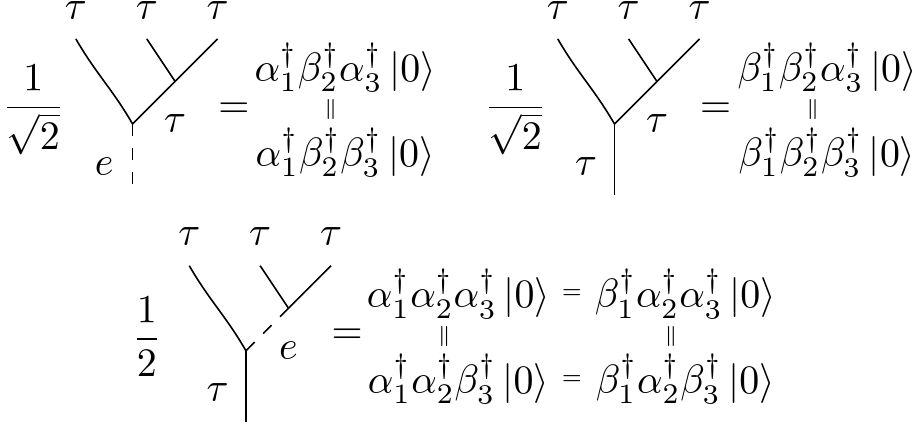}
     \hfill 
    \caption{Canonical basis as a Fock basis, applying the renormalised anyonic creation operators $\alpha,\beta$ to the vacuum.}
    \label{fig:fock}
\end{figure} 

\end{widetext}

\end{document}